\documentclass[sigconf]{acmart}

\usepackage{algorithm}
\usepackage{algpseudocode}
\usepackage{subcaption}

\copyrightyear{2026}
\acmYear{2026}
\setcopyright{cc}
\setcctype{by}
\acmConference[ICPP '26]{Proceedings of the 55th International Conference on Parallel Processing}{September 28-October 01, 2026}{Singapore, Singapore}
\acmBooktitle{Proceedings of the 55th International Conference on Parallel Processing (ICPP '26), September 28-October 01, 2026, Singapore, Singapore}
\acmDOI{10.1145/3832810.3832873}
\acmISBN{979-8-4007-2657-6/2026/09}

\graphicspath{{figures/}}

\begin{document}

\title{Scalable Exact Path Selection via Structure-Aware Search for Virtual Payment Channels}

\author{Jiangnan Luo}
\orcid{0009-0003-7111-4486}
\affiliation{%
  \department{State Key Laboratory of Novel Software Technology}
  \institution{Nanjing University}
  \city{Nanjing}
  \country{China}
}
\email{jiangnanluo@smail.nju.edu.cn}

\author{Zhebei Shen}
\orcid{0009-0000-7043-3731}
\affiliation{
  \institution{Zhejiang University}
  \city{Hangzhou}
  \country{China}
}
\email{shenzhebei@zju.edu.cn}

\author{Yuan Zhang}
\orcid{0000-0001-9682-5231}
\affiliation{%
  \department{State Key Laboratory of Novel Software Technology}
  \institution{Nanjing University}
  \city{Nanjing}
  \country{China}
}
\email{zhangyuan@nju.edu.cn}

\author{Sheng Zhong}
\orcid{0000-0002-6581-8730}
\affiliation{%
  \department{State Key Laboratory of Novel Software Technology}
  \institution{Nanjing University}
  \city{Nanjing}
  \country{China}
}
\email{zhongsheng@nju.edu.cn}

\begin{abstract}
Virtual Payment Channels (VPCs) enable efficient off-chain transactions in Payment Channel Networks (PCNs), but their performance depends on selecting high-quality underlying paths. Existing approaches either rely on simplified metrics or incur high computational cost.

We study VPC path selection under generalized monotone metrics and propose a structure-aware exact solver based on quadtree search. By exploiting monotonicity and distance plateau properties, our method prunes large regions of the capacity-constrained search space while preserving optimality, significantly reducing the number of shortest-path computations.

We further instantiate the framework with a composite metric that integrates economic cost and security risk, enabling flexible trade-offs across application scenarios. Experiments on synthetic graphs and real-world Lightning Network topologies (up to 12,552 nodes) show 2--5 orders of magnitude speedup over prior work, with consistent sub-100ms latency.
\end{abstract}

\begin{CCSXML}
<ccs2012>
   <concept>
       <concept_id>10003752.10003809.10003635.10010037</concept_id>
       <concept_desc>Theory of computation~Shortest paths</concept_desc>
       <concept_significance>500</concept_significance>
   </concept>
   <concept>
       <concept_id>10003033.10003068</concept_id>
       <concept_desc>Networks~Network algorithms</concept_desc>
       <concept_significance>500</concept_significance>
   </concept>
   <concept>
       <concept_id>10002978.10003006.10003013</concept_id>
       <concept_desc>Security and privacy~Distributed systems security</concept_desc>
       <concept_significance>300</concept_significance>
   </concept>
   <concept>
       <concept_id>10010405.10003550.10003551</concept_id>
       <concept_desc>Applied computing~Digital cash</concept_desc>
       <concept_significance>100</concept_significance>
   </concept>
</ccs2012>
\end{CCSXML}

\ccsdesc[500]{Theory of computation~Shortest paths}
\ccsdesc[500]{Networks~Network algorithms}
\ccsdesc[300]{Security and privacy~Distributed systems security}
\ccsdesc[100]{Applied computing~Digital cash}

\keywords{Payment channel networks, virtual payment channels, path selection, Lightning Network, shortest-path algorithms, quadtree search}

\maketitle

\section{Introduction}
\label{sec:intro}

The rapid adoption of decentralized cryptocurrencies, most notably Bitcoin, has exposed fundamental scalability limitations, such as low throughput and high confirmation latency. To address these challenges, Layer-2 solutions \cite{rebello2024survey,rao2024scalability,dotan2021survey}, particularly Payment Channel Networks (PCNs) such as the Lightning Network (LN) \cite{poon2016bitcoin}, have been proposed and widely deployed. By offloading transactions from the blockchain to a network of bidirectional payment channels, PCNs enable near-instantaneous payments with significantly reduced fees. However, their performance critically depends on efficient routing and channel utilization in a dynamic and resource-constrained network.

A promising extension is Virtual Payment Channels (VPCs) \cite{dziembowski2019perun, aumayr2021bitcoin}, which allow two nodes to establish a logical direct channel over a sequence of existing channels, referred to as the \textit{underlying path}. By bypassing repeated multi-hop routing, VPCs can reduce cumulative fees, improve reliability, and mitigate certain security risks. The effectiveness of a VPC, however, is fundamentally determined by the quality of its underlying path.

Selecting an optimal underlying path is challenging: it requires jointly balancing heterogeneous and often competing objectives, such as maximizing transferable capacity while minimizing economic cost and exposure to adversarial risks. Existing approaches typically focus on specific aspects, such as topological efficiency or security \cite{malavolta2017concurrency}, but often rely on simplified models or incur significant computational overhead. This leaves a gap for a unified and efficient solution that flexibly captures diverse goals while remaining practical at scale. Notably, unlike per-payment routing, VPC establishment is an infrequent decision that locks collateral for the channel's entire lifetime, and a suboptimal underlying path penalizes every subsequent payment; this justifies exact optimization---provided it is fast enough for interactive use, which our solver achieves.

In this paper, we propose a generalized metric framework for VPC path selection that addresses these challenges by decoupling optimization logic from metric design.
Our main contributions are:

\begin{itemize}
\item \textbf{Structure-aware exact search.}
We study VPC path selection in the capacity-constrained search space and design an exact quadtree-based algorithm that exploits monotonicity and distance plateau structures to prune large regions while preserving optimality.
\item \textbf{Effective pruning mechanism.}
We develop synergistic pruning strategies that significantly reduce the number of shortest-path evaluations, enabling practical efficiency even at large scales.
\item \textbf{Scalable performance.}
Experiments on real and synthetic networks show 2--5 orders of magnitude speedup over prior work with consistent sub-100ms latency.
\item \textbf{Metric instantiation.}
We instantiate the framework with a composite metric (SECER) that integrates economic cost and security risk, demonstrating its flexibility in supporting different optimization objectives.
\end{itemize}

The remainder of this paper is organized as follows. Section \ref{sec:background} provides the necessary background on VPCs and reviews related work. Section \ref{sec:framework} formalizes the generalized VPC path selection problem. Section \ref{sec:solver} details our high-performance quadtree-based solver and its pruning strategies. Section \ref{sec:metric} introduces the SECER metric for balancing economy and security. Section \ref{sec:evaluation} presents the experimental results, and Section \ref{sec:conclusion} concludes the paper.

\section{Background}
\label{sec:background}

\subsection{Virtual Payment Channels}
\label{subsec:vpc_def}


Virtual Payment Channels (VPCs) enable non-adjacent nodes in a Payment Channel Network (PCN) to establish a direct off-chain relationship by leveraging a sequence of existing channels, referred to as the \textit{underlying path} \cite{dziembowski2019perun, aumayr2021bitcoin}. The endpoints can then transact as if connected by a direct channel, avoiding repeated multi-hop routing for each payment.

By locking collateral along the underlying path, a VPC provides dedicated capacity between the endpoints and, once established, supports instant and private payments without the continuous involvement of intermediate nodes. Its effectiveness, however, critically depends on the choice of the underlying path: candidate paths vary significantly in capacity, cost, and reliability, making path selection a key challenge in VPC construction.

\subsection{Related Work}

Research on VPCs has primarily focused on cryptographic protocols and security proofs, often assuming the underlying path is either pre-selected or discovered via standard routing heuristics \cite{li2023efficient, kiayias2021elmo}. Only recently has the problem of optimizing the underlying path itself gained some attention.

\paragraph{Capacity-Oriented Optimization.}
Thor \cite{wei2024thor} was the first to rigorously formalize the trade-off between VPC capacity and intermediary overhead by introducing the Capacity to Number of Intermediaries Ratio (CNIR). For an underlying path $P$, CNIR is defined as the ratio between the total capacity of the VPC and the number of intermediate nodes. Thor proposes an exact algorithm for the maximum CNIR problem by iterating over all possible capacity thresholds and performing breadth-first searches.

Despite its optimality, Thor suffers from high computational complexity, namely $O(m^2(n+m))$, where $n$ is the number of nodes and $m$ is the number of channels. This makes it difficult to scale to large PCNs such as the modern Lightning Network, which consists of tens of thousands of nodes and channels \cite{seres2020topological,valko2025topology}. Moreover, CNIR relies on hop count as a proxy for cost, which overlooks the heterogeneity of intermediaries in terms of fees and reliability.

\paragraph{Economic and Security-Aware Optimization.}
Aumayr et al. \cite{aumayr2025optimizing} study VPC construction under adversarial conditions by incorporating node risk into the optimization process. Their work formulates a global optimization problem that balances VPC construction costs with long-term transaction costs. To compute optimal strategies, they employ an Integer Linear Program (ILP), which is computationally expensive and mainly suitable for offline planning.

For individual node pairs, they further propose a greedy heuristic to approximate the trade-off between cost and security. While their approach introduces adversarial considerations into VPC path selection, it primarily treats node risk as a binary exclusion criterion (skipping nodes assumed corrupted under an attacker's budget) rather than a continuous metric integrated into path selection.

In contrast to prior work, our approach aims to provide a unified and efficient framework that can incorporate multiple optimization objectives, including capacity, cost, and security, while remaining scalable to modern PCNs.

\section{Generalized Metric Model for Path Selection}
\label{sec:framework}

Building on these concepts, we propose a generalized metric model for VPC path selection. This framework transitions from simple hop-count optimization to a multi-dimensional evaluation of underlying paths, allowing for a flexible balance between capacity, economy, and security.

\subsection{Problem Formulation}
The core objective of our framework is to identify an \textit{underlying path} (ULP) $P$ that maximizes a generalized efficiency ratio. Formally, given a source node $s$ and a target node $t$ in a payment channel network $G = (V, E)$, we seek to solve:
\begin{equation}
    \max_{P \in \mathcal{P}_{s \to t}} \Phi(P) = \frac{\delta(P^{s\to t}) + \delta(P^{t\to s})}{D(P)}
\end{equation}
where:
\begin{itemize}
    \item $\mathcal{P}_{s \to t}$ is the set of all simple paths between $s$ and $t$.
    \item $\delta(P^{s\to t})$ and $\delta(P^{t\to s})$ are the \textit{maximum liquefiable balances} (MLB) of the path $P$ in the forward and backward directions, respectively.
    \item $D(P)$ is a generalized distance function that evaluates the "cost" or "penalty" associated with path $P$.
\end{itemize}

\subsection{Properties of the Distance Function}

To ensure that the optimization problem is computationally solvable, we define $D(P)$ as a generalized cost function that evaluates the "distance" of a path $P$. While the simplest form of $D(P)$ is the summation of static edge weights, our framework is designed to accommodate more complex, realistic cost structures such as compounding routing fees and cumulative security risks.

The essential requirement for $D(P)$ in our framework is that it must exhibit \textit{Optimal Substructure} and \textit{Monotonicity}. Specifically, let $P_{u \to v}$ be a path from $u$ to $v$. For any intermediate node $w \in P_{u \to v}$, the cost of the prefix path must be a monotonic function of its sub-paths. This ensures that the "shortest" path (the path that minimizes $D(P)$ under certain capacity constraints) can be efficiently computed using generalized shortest-path algorithms, such as Dijkstra's algorithm \cite{dijkstra2022note} or its variants for payment networks \cite{roos2018settling}.

For example, in the context of economic costs, $D(P)$ can represent the total fee required to send a unit of balance. Even if fees are compounded (i.e., fees are charged on top of fees for downstream hops), the shortest-path property holds as long as the fee functions are non-negative and monotonic with respect to the transaction amount. By requiring $D(P)$ to be a "shortest-path distance," we maintain a high degree of modeling flexibility without sacrificing algorithmic efficiency.

\subsection{Generalization of the CNIR Metric}
The proposed framework is a strict generalization of the Capacity to Number of Intermediaries Ratio (CNIR) proposed in Thor \cite{wei2024thor}, which measures the overhead of a path $P = (v_0, v_1, \dots, v_k)$ by its number of intermediate nodes $d = k-1$. Setting $D_{\text{CNIR}}(P) = d$---i.e., assigning each intermediate node a unit weight---reduces the objective function $\Phi(P)$ exactly to the original CNIR metric. Generalizing $D(P)$ from an intermediary count to a shortest-path distance accommodating compounding fees and heterogeneous risks lets our framework model real-world Lightning Network dynamics while retaining the rigorous optimization properties of CNIR.

\section{Efficient Quadtree-based Exact Solver}
\label{sec:solver}

To solve the generalized optimization model proposed in Section \ref{sec:framework}, relying on the nested-loop enumeration approach used by Thor is computationally prohibitive, especially when the cost function $D(P)$ incorporates complex calculations.
Therefore, we propose an efficient quadtree-based exact solver to solve this problem.

\subsection{Problem Transformation}
The objective function $\Phi(P) = \frac{\delta(P^{s\to t}) + \delta(P^{t\to s})}{D(P)}$ involves both capacities and distances, which are coupled. However, we can decouple them by observing that for any fixed forward capacity threshold $a$ and backward capacity threshold $b$, the ``best'' possible path is the one that minimizes the distance $D(P)$ while satisfying the capacity constraints.

Specifically, for any pair $(a, b)$, let $G_{a,b} = (V, E_{a,b})$ be the restricted subgraph where $E_{a,b} = \{e \in E \mid w_{s\to t}(e) \ge a \wedge w_{t\to s}(e) \ge b\}$. We define the minimum distance function $D(a, b)$ as:
\begin{equation}
    D(a, b) = \min_{P \subseteq G_{a,b}} D(P)
\end{equation}

\begin{lemma}[Monotonicity of $D(a, b)$]
    \label{lemma:mono}
    The minimum distance function $D(a, b)$ is monotonically non-decreasing with respect to $a$ and $b$. That is, for any $a \le a'$ and $b \le b'$, $D(a, b) \le D(a', b')$.
\end{lemma}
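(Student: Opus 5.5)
The plan is a direct set-inclusion argument: tightening the capacity thresholds can only remove edges, so it can only shrink the set of feasible paths, and a minimum over a smaller set cannot be smaller.

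\textbf{Step 1: the edge sets are nested.} Fix $a \le a'$ and $b \le b'$. Take any edge $e \in E_{a',b'}$. By definition, $w_{s\to t}(e) \ge a' \ge a$ and $w_{t\to s}(e) \ge b' \ge b$. Hence $e \in E_{a,b}$, and so $E_{a',b'} \subseteq E_{a,b}$. In other words, $G_{a',b'}$ is a subgraph of $G_{a,b}$ on the same vertex set $V$.

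\textbf{Step 2: the feasible path sets are nested.} Let $\mathcal{P}_{a,b}$ be the set of $s$--$t$ paths contained in $G_{a,b}$. A path lies in a graph exactly when all its edges do. Step 1 therefore gives $\mathcal{P}_{a',b'} \subseteq \mathcal{P}_{a,b}$.

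\textbf{Step 3: the minima are ordered.} The cost $D(P)$ of a path depends only on the path itself, through its edge and node weights. It does not depend on which subgraph the path is viewed in. So $D(a,b)$ and $D(a',b')$ minimize the same function over the nested sets $\mathcal{P}_{a,b} \supseteq \mathcal{P}_{a',b'}$. Let $P^\ast$ attain $D(a',b')$. Then $P^\ast \in \mathcal{P}_{a,b}$, which gives $D(a,b) \le D(P^\ast) = D(a',b')$.

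\textbf{Main obstacle: conventions.} The only delicate point is fixing conventions rather than any real mathematics; I will state them explicitly.
\begin{enumerate}
\item If $s$ and $t$ are disconnected in a subgraph, that minimum is over the empty set. I will set $D(a,b) = +\infty$ in this case. The inequality then holds trivially when $\mathcal{P}_{a',b'} = \emptyset$. The case $\mathcal{P}_{a,b} = \emptyset \ne \mathcal{P}_{a',b'}$ cannot occur, by Step 2.
\item The minimum is attained whenever the feasible set is nonempty, since there are finitely many simple paths. If one prefers to work with infima, the same argument goes through verbatim.
\end{enumerate}
Note that the optimal-substructure property of $D$ assumed in Section~\ref{sec:framework} is not needed here. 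It matters only for computing $D(a,b)$ by Dijkstra-type algorithms, not for this monotonicity.
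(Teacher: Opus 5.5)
Your proof is correct and follows the same set-inclusion argument as the paper: raising the thresholds can only shrink the set of feasible paths, and a minimum over a subset cannot be smaller. Your explicit convention $D(a,b)=+\infty$ for an empty feasible set, and your remark that $D(P)$ does not depend on the thresholds (the paper's stability condition), spell out what the paper's one-line proof leaves implicit.
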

\begin{proof}
    For $a \le a'$ and $b \le b'$, the set of paths in $G_{a', b'}$ is a subset of those in $G_{a, b}$. Since $D(a, b)$ and $D(a', b')$ are the minimum distances over these sets respectively, taking the minimum over a larger set yields a smaller or equal value, hence $D(a, b) \le D(a', b')$.
\end{proof}

The global optimization problem can then be reformulated as a search over the two-dimensional space of discrete edge weights:
\begin{equation}
    \label{equation:transform}
    \max_{P \in \mathcal{P}} \Phi(P) = \max_{a, b \in W} \frac{a + b}{D(a, b)}
\end{equation}

To ensure correctness, our solver requires three conditions on the metric: (i) \emph{monotonicity}---$D(P)$ is path-monotone and can be exactly optimized by a shortest-path oracle (e.g., Dijkstra's algorithm); (ii) \emph{stability}---$D(P)$ is positive for all feasible paths and independent of the searched capacity thresholds $(a, b)$; and (iii) \emph{discrete bottlenecks}---the bottleneck capacities of any optimal path belong to the set of unique edge capacities $W$.

\begin{theorem}[Equivalence of Optimization]
    \label{thm:equivalence}
    Under the aforementioned conditions, the optimal value of the path-based objective function $\Phi(P)$ is identical to the optimal value of the capacity-space search in equation \ref{equation:transform}.
\end{theorem}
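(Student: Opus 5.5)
The plan is to prove the two inequalities $\Phi^\ast \le \Psi^\ast$ and $\Psi^\ast \le \Phi^\ast$ separately. Here $\Phi^\ast = \max_{P \in \mathcal{P}} \Phi(P)$ and $\Psi^\ast = \max_{a,b \in W} (a+b)/D(a,b)$, with the convention that $D(a,b) = +\infty$ (ratio $0$) when $G_{a,b}$ contains no $s$--$t$ path. Throughout I will identify the MLB $\delta(P^{s\to t})$ with the bottleneck $\min_{e\in P} w_{s\to t}(e)$, and similarly backward. Under this identification, a path $P$ lies in $G_{a,b}$ if and only if $\delta(P^{s\to t}) \ge a$ and $\delta(P^{t\to s}) \ge b$.

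\emph{Step 1 ($\Phi^\ast \le \Psi^\ast$).} Let $P^\ast$ be an optimal path, with $a^\ast = \delta(P^{\ast,s\to t})$ and $b^\ast = \delta(P^{\ast,t\to s})$. By condition (iii), $a^\ast, b^\ast \in W$. Every edge of $P^\ast$ satisfies both thresholds, so $P^\ast \subseteq G_{a^\ast,b^\ast}$. Hence $D(a^\ast,b^\ast) \le D(P^\ast)$. Condition (ii) makes $D(P^\ast)$ the same threshold-independent positive quantity that appears in $\Phi$, so
\[
\Phi(P^\ast) = \frac{a^\ast+b^\ast}{D(P^\ast)} \le \frac{a^\ast+b^\ast}{D(a^\ast,b^\ast)} \le \Psi^\ast .
\]
Positivity from (ii) ensures there is no division by zero.

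\emph{Step 2 ($\Psi^\ast \le \Phi^\ast$).} Fix $(a,b) \in W^2$ with $G_{a,b}$ containing an $s$--$t$ path; otherwise the term is $0 \le \Phi^\ast$. By condition (i), the shortest-path oracle returns a path $Q \subseteq G_{a,b}$ attaining $D(Q) = D(a,b)$. I need $Q$ to be simple so that $Q \in \mathcal{P}$. If $Q$ is not simple, I will shortcut its cycles. Path-monotonicity says removing a cycle does not increase $D$, and the shortcut path still lies in $G_{a,b}$. Since $Q \subseteq G_{a,b}$, every edge satisfies the thresholds, so $\delta(Q^{s\to t}) \ge a$ and $\delta(Q^{t\to s}) \ge b$. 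Therefore
\[
\frac{a+b}{D(a,b)} \le \frac{\delta(Q^{s\to t})+\delta(Q^{t\to s})}{D(Q)} = \Phi(Q) \le \Phi^\ast .
\]
Taking the maximum over $(a,b)$ gives the claim. Combining the two steps yields equality.

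I expect the main obstacle to be justifying the modelling identifications rather than the inequalities themselves. Two points need care. First, the MLB must equal, or at least be bounded below by, the edgewise bottleneck, so that membership in $G_{a,b}$ implies $\delta \ge$ threshold. Second, $D(P)$ must not depend on the amount routed, which is what condition (ii) provides. I would state both explicitly as consequences of (ii) and (iii). I would also note that Lemma~\ref{lemma:mono} is not needed for equivalence; it matters only for the pruning later.
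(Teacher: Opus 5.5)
Your proposal is correct and follows essentially the same route as the paper. The paper also bounds $\Phi(P)$ by $(a+b)/D(a,b)$ at the path's own bottleneck pair, and conversely takes a shortest path in $G_{a,b}$ whose bottlenecks dominate $(a,b)$; your extra care about the simplicity of $Q$, membership of the bottleneck pair in $W$ via condition (iii), and the $D=\infty$ convention only makes explicit what the paper leaves implicit.
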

\begin{proof}
    For any path $P$, let $a = \delta(P^{s \to t})$ and $b = \delta(P^{t \to s})$. Since $P \subseteq G_{a,b}$, $D(a,b) \le D(P)$, which implies $\Phi(P) = \frac{a+b}{D(P)} \le \frac{a+b}{D(a,b)} \le \max_{a,b} \frac{a+b}{D(a,b)}$.
    Conversely, for any pair $(a,b)$, let $P$ be the shortest path in $G_{a,b}$. Then $\delta(P^{s \to t}) \ge a$, $\delta(P^{t \to s}) \ge b$, and $D(P) = D(a,b)$, so $\Phi(P) = \frac{\delta(P^{s \to t}) + \delta(P^{t \to s})}{D(P)} \ge \frac{a+b}{D(a,b)}$. Thus, both maxima must coincide.
\end{proof}

This transformation shifts the focus from enumerating an exponential number of paths to searching within the finite space of edge weights.

\subsection{Universal Quadtree Exact Search Algorithm}
To efficiently navigate the capacity space and identify the optimal underlying path, we propose a universal quadtree-based exact search algorithm, as detailed in Algorithm \ref{alg:quadtree}. Unlike exhaustive search methods that iterate through every possible weight pair, our algorithm leverages the mathematical properties established below to skip vast regions of the search space.

\begin{algorithm}[t]
\caption{Universal Quadtree-based Exact Solver}
\label{alg:quadtree}
\begin{algorithmic}[1]
\Procedure{SolveVPC}{$s, t, G$}
    \State $W \gets$ sorted unique capacities of $\{w_{s\to t}(e) \mid e \in E\} \cup \{w_{t\to s}(e) \mid e \in E\}$
    \State $\Phi^* \gets 0, P^* \gets \text{null}$
    \State \Call{QuadSearch}{1, $|W|$, 1, $|W|$}
    \State \Return $\Phi^*, P^*$
\EndProcedure

\Procedure{QuadSearch}{$i_1, i_2, j_1, j_2$}
    \State \textbf{if} $i_1 > i_2 \lor j_1 > j_2$ \textbf{then} \Return \Comment{Degenerate block}
    \State $a_{min}, a_{max} \gets W[i_1], W[i_2]$
    \State $b_{min}, b_{max} \gets W[j_1], W[j_2]$
    \If{$\Phi^* = 0$} $d_{lim} \gets \infty$
    \Else \ $d_{lim} \gets (a_{max} + b_{max}) / \Phi^*$
    \EndIf \Comment{Dynamic pruning threshold}
    \State $(d_{min}, P_{min}) \gets \text{ShortestPath}(G_{a_{min}, b_{min}}, s, t, d_{lim})$
    \If{$d_{min} = \infty$}
        \State \Return
    \EndIf
    \State $(d_{max}, P_{max}) \gets \text{ShortestPath}(G_{a_{max}, b_{max}}, s, t, d_{lim})$
    \If{$d_{min} = d_{max}$} \Comment{Plateau found}
        \State $\Phi_{curr} \gets (a_{max} + b_{max}) / d_{max}$
        \If{$\Phi_{curr} > \Phi^*$}
            \State $\Phi^* \gets \Phi_{curr}, P^* \gets P_{max}$
        \EndIf
        \State \Return \Comment{Sparsity-based pruning (Thm \ref{thm:sparsity})}
    \EndIf
    \State $i_{mid} \gets \lfloor (i_1 + i_2) / 2 \rfloor, j_{mid} \gets \lfloor (j_1 + j_2) / 2 \rfloor$
    \State \Call{QuadSearch}{$i_1, i_{mid}, j_1, j_{mid}$}
    \State \Call{QuadSearch}{$i_{mid}+1, i_2, j_1, j_{mid}$}
    \State \Call{QuadSearch}{$i_1, i_{mid}, j_{mid}+1, j_2$}
    \State \Call{QuadSearch}{$i_{mid}+1, i_2, j_{mid}+1, j_2$}
\EndProcedure
\end{algorithmic}
\end{algorithm}

The algorithm operates on the discrete set $W$ of unique capacity weights found in the graph: \textsc{SolveVPC} collects this set and invokes the recursive \textsc{QuadSearch} over the full index range, which subdivides each capacity block $[a_{min}, a_{max}] \times [b_{min}, b_{max}]$ into four quadrants by bisecting both index ranges. Degenerate blocks with an empty index range---which arise once a dimension has narrowed to a single index---are rejected upfront (Line 8), so blocks split safely even when only one dimension can still be subdivided. Two pruning strategies (detailed in Section \ref{subsec:pruning}) are applied at each recursive step to avoid redundant shortest-path computations. The algorithm terminates when all blocks have either been pruned or resolved to a plateau, at which point $\Phi^*$ holds the exact optimal objective value and $P^*$ the corresponding path.

\subsection{Pruning Strategies}
\label{subsec:pruning}

Although the capacity-space reformulation reduces the search from an exponential path enumeration to a finite grid, naively evaluating every $(a, b)$ pair is still prohibitive in dense networks with thousands of distinct edge weights. The quadtree algorithm achieves practical efficiency through two complementary pruning strategies that together allow large portions of the search space to be skipped without sacrificing exactness.

\subsubsection{Plateau Pruning}
\label{subsubsec:plateau}

The discrete nature of the underlying graph $G$ induces a ``plateau'' effect on the distance function $D(a, b)$: since the restricted subgraph $G_{a,b}$ only changes when an edge is added or removed, $D(a,b)$ remains constant over large contiguous regions of the capacity space.

\begin{figure*}[t]
    \centering
    \begin{subfigure}{0.16\textwidth}
        \centering
        \includegraphics[width=\textwidth]{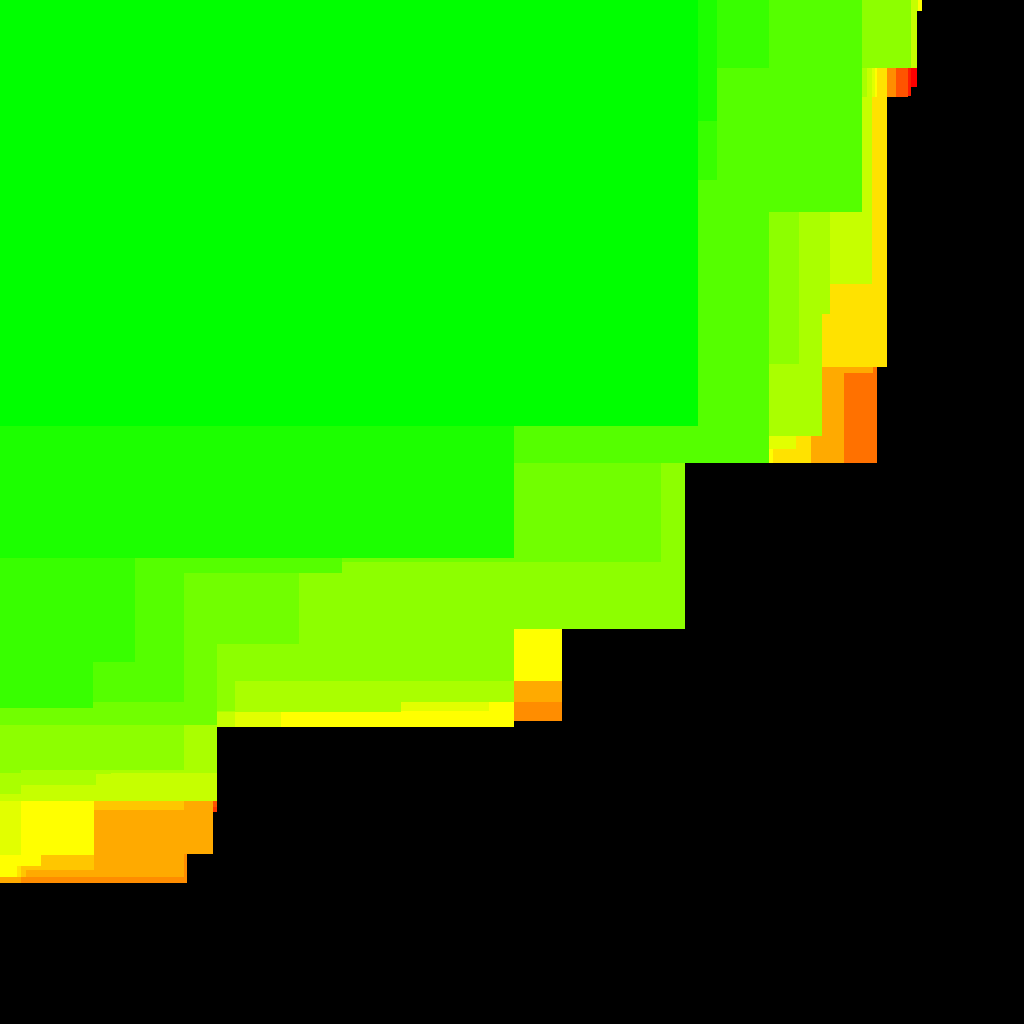}
        \caption{BA (CNIR, Pair A)}
    \end{subfigure}
    \begin{subfigure}{0.16\textwidth}
        \centering
        \includegraphics[width=\textwidth]{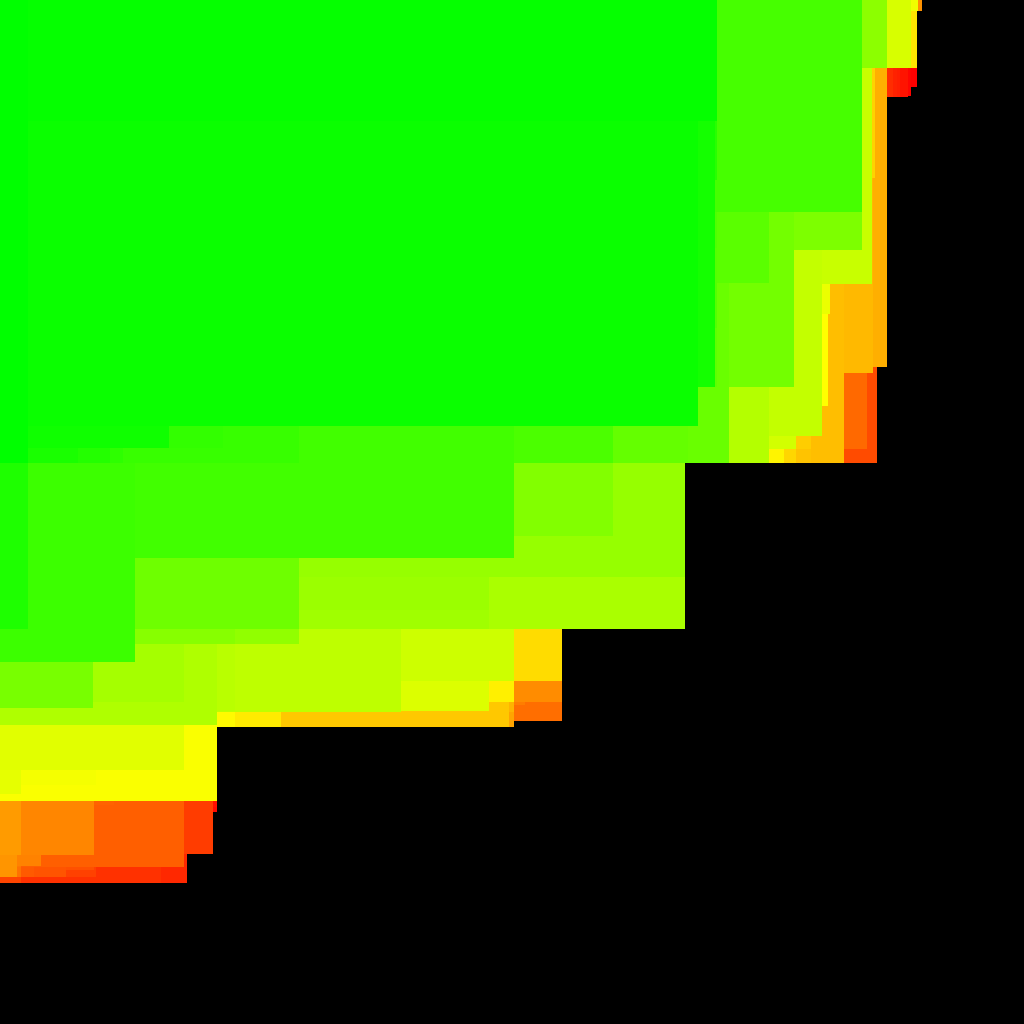}
        \caption{BA (SECER, Pair A)}
    \end{subfigure}
    \hfill
    \begin{subfigure}{0.16\textwidth}
        \centering
        \includegraphics[width=\textwidth]{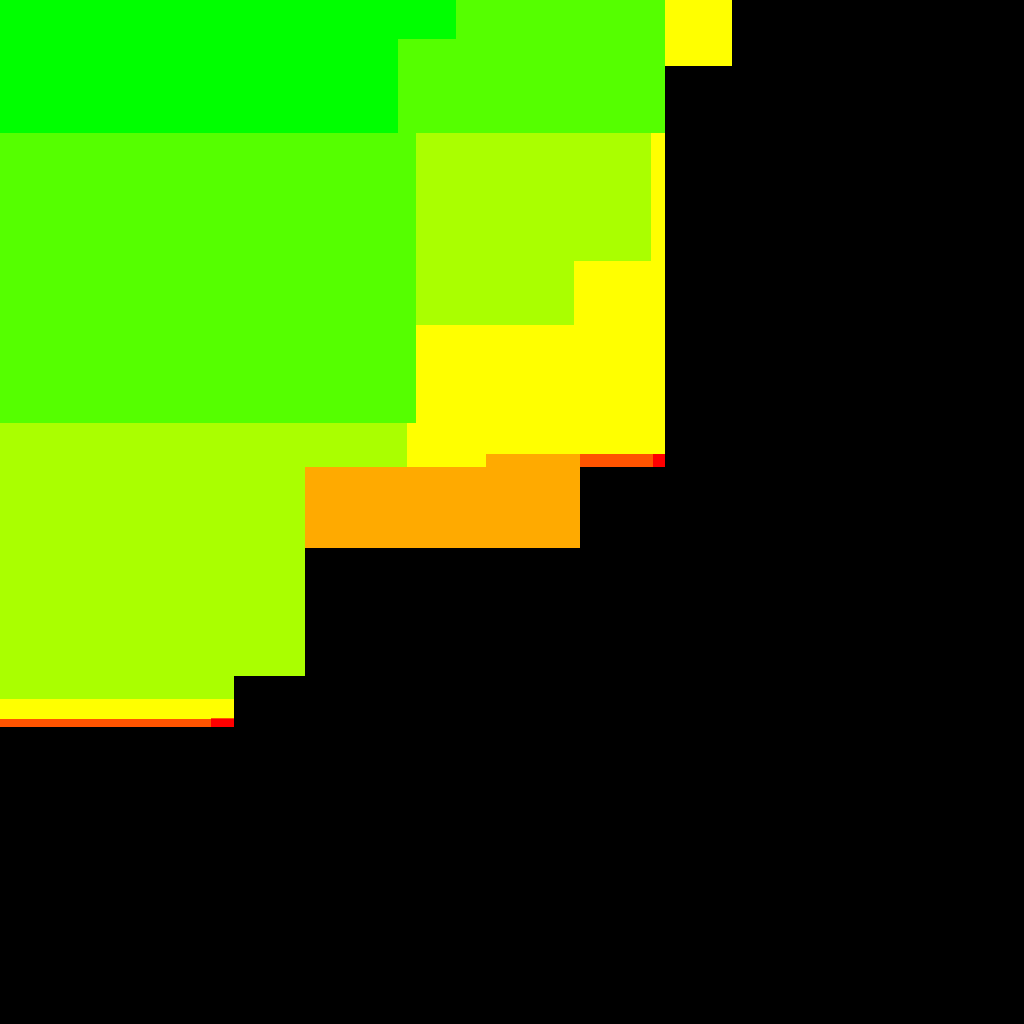}
        \caption{BA (CNIR, Pair B)}
    \end{subfigure}
    \begin{subfigure}{0.16\textwidth}
        \centering
        \includegraphics[width=\textwidth]{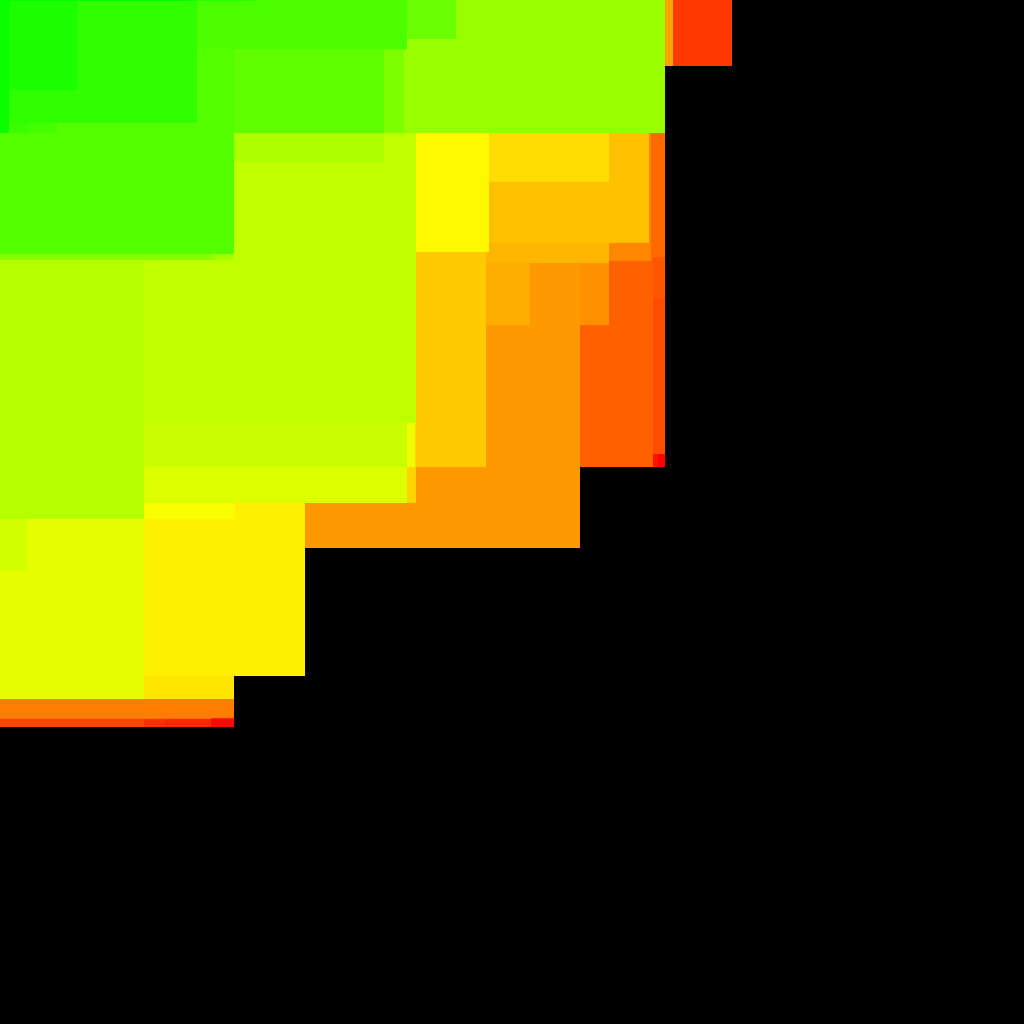}
        \caption{BA (SECER, Pair B)}
    \end{subfigure}
    \hfill
    \begin{subfigure}{0.16\textwidth}
        \centering
        \includegraphics[width=\textwidth]{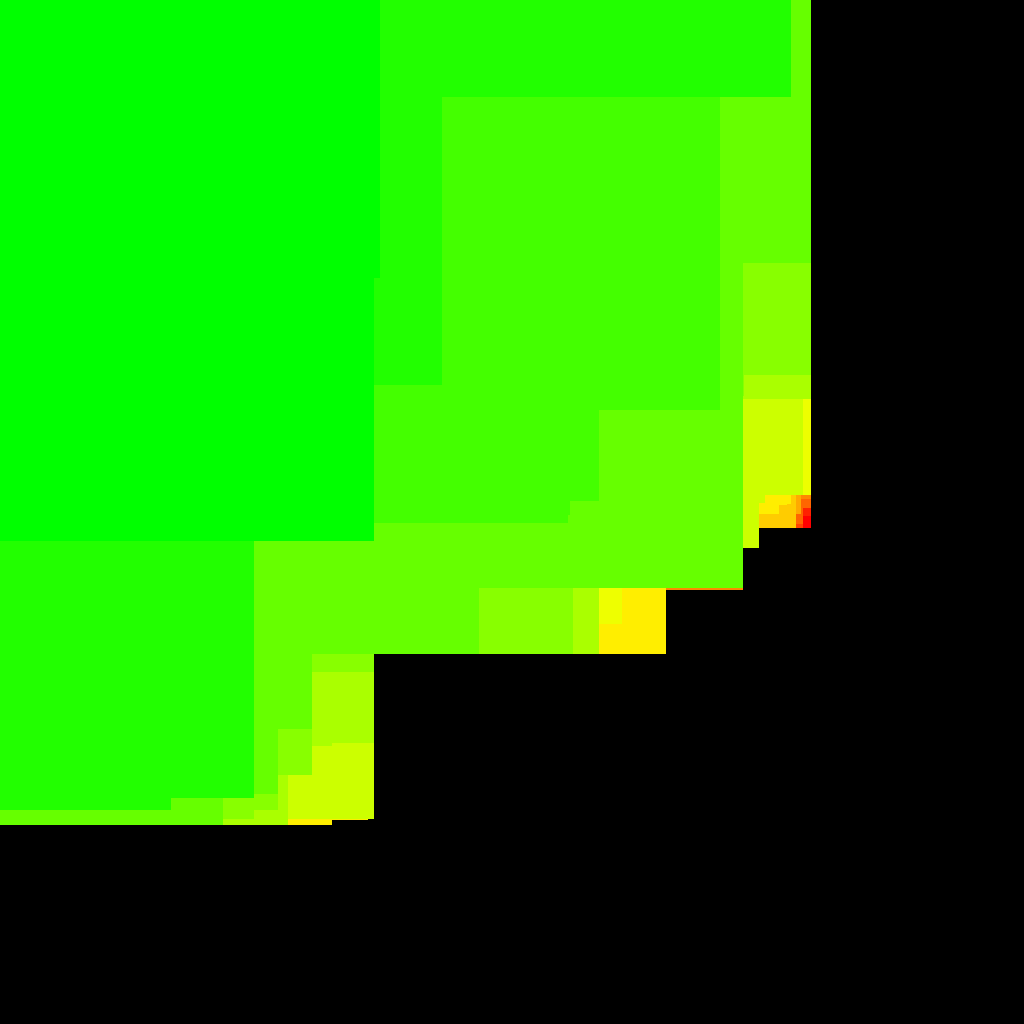}
        \caption{BA (CNIR, Pair C)}
    \end{subfigure}
    \begin{subfigure}{0.16\textwidth}
        \centering
        \includegraphics[width=\textwidth]{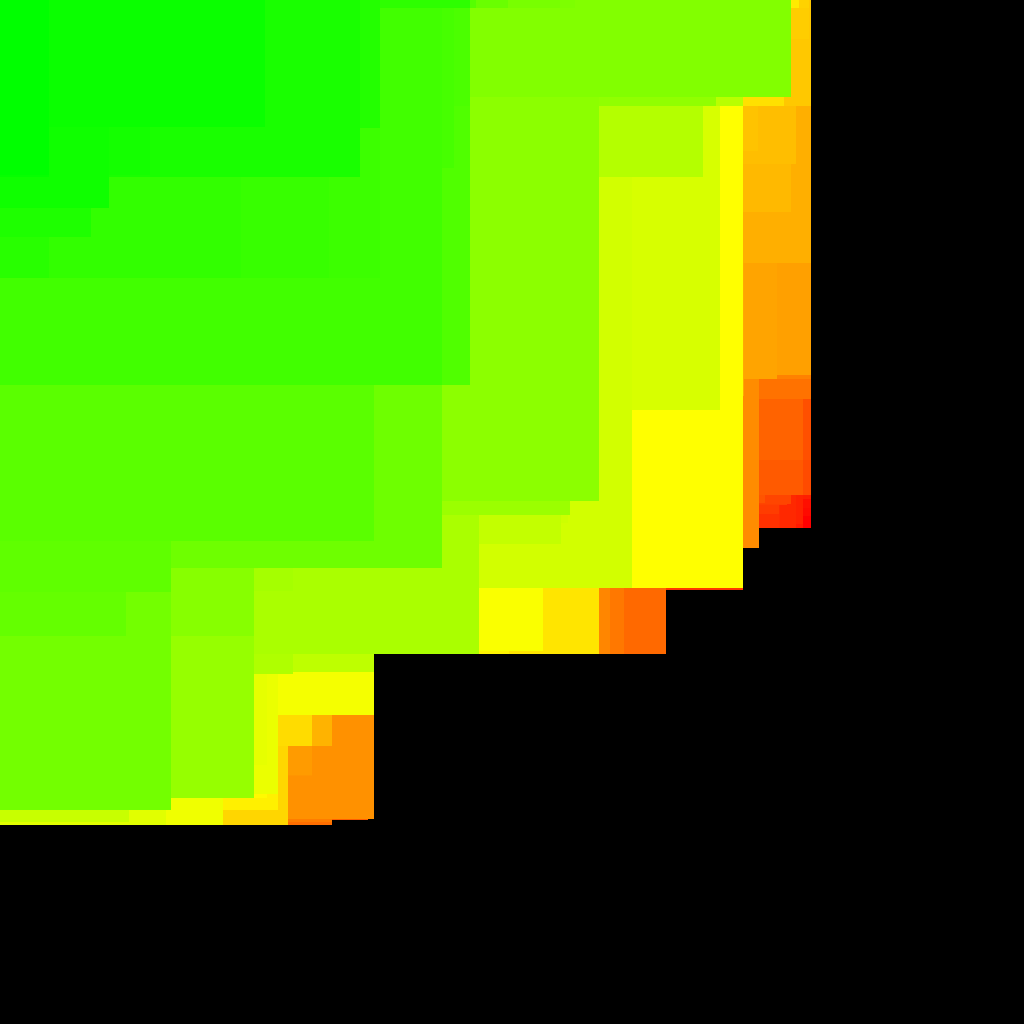}
        \caption{BA (SECER, Pair C)}
    \end{subfigure}

    \vspace{0.5em}

    \begin{subfigure}{0.16\textwidth}
        \centering
        \includegraphics[width=\textwidth]{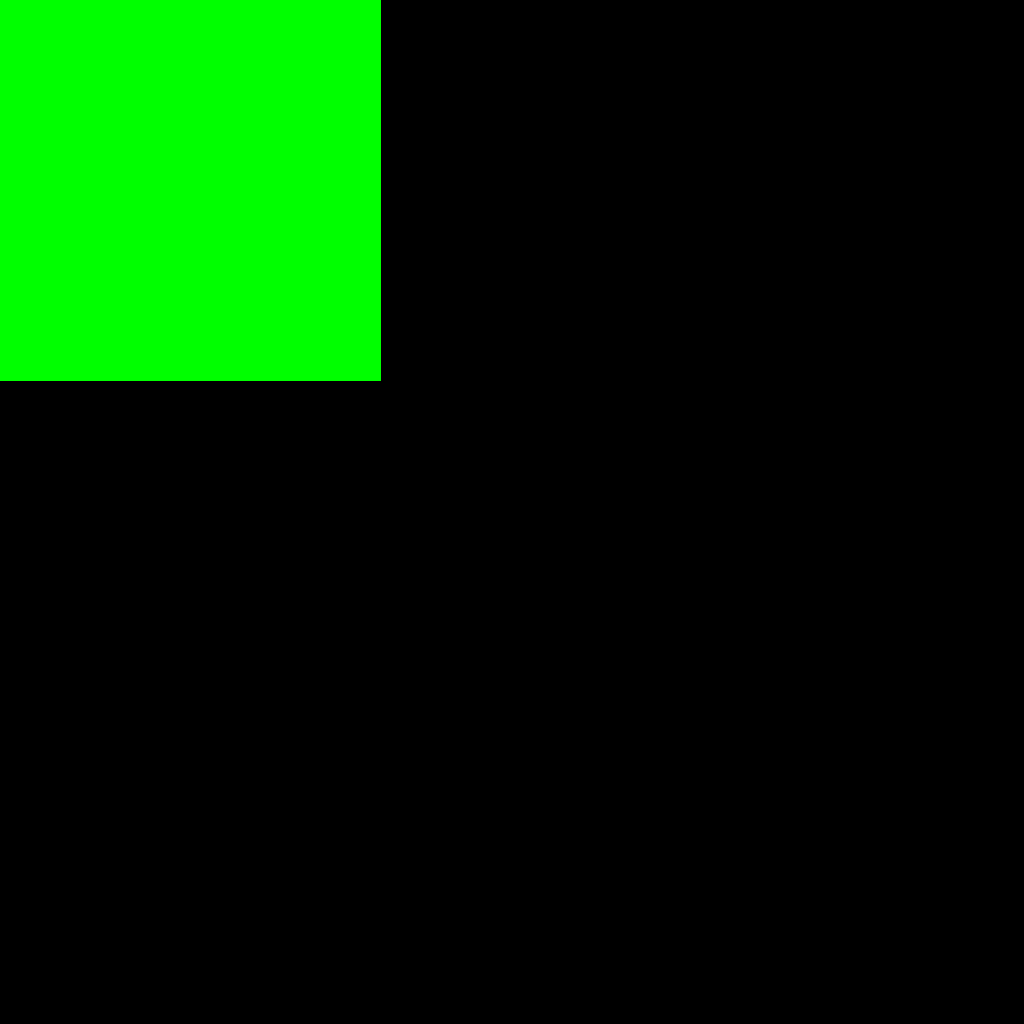}
        \caption{LN (CNIR, Pair D)}
    \end{subfigure}
    \begin{subfigure}{0.16\textwidth}
        \centering
        \includegraphics[width=\textwidth]{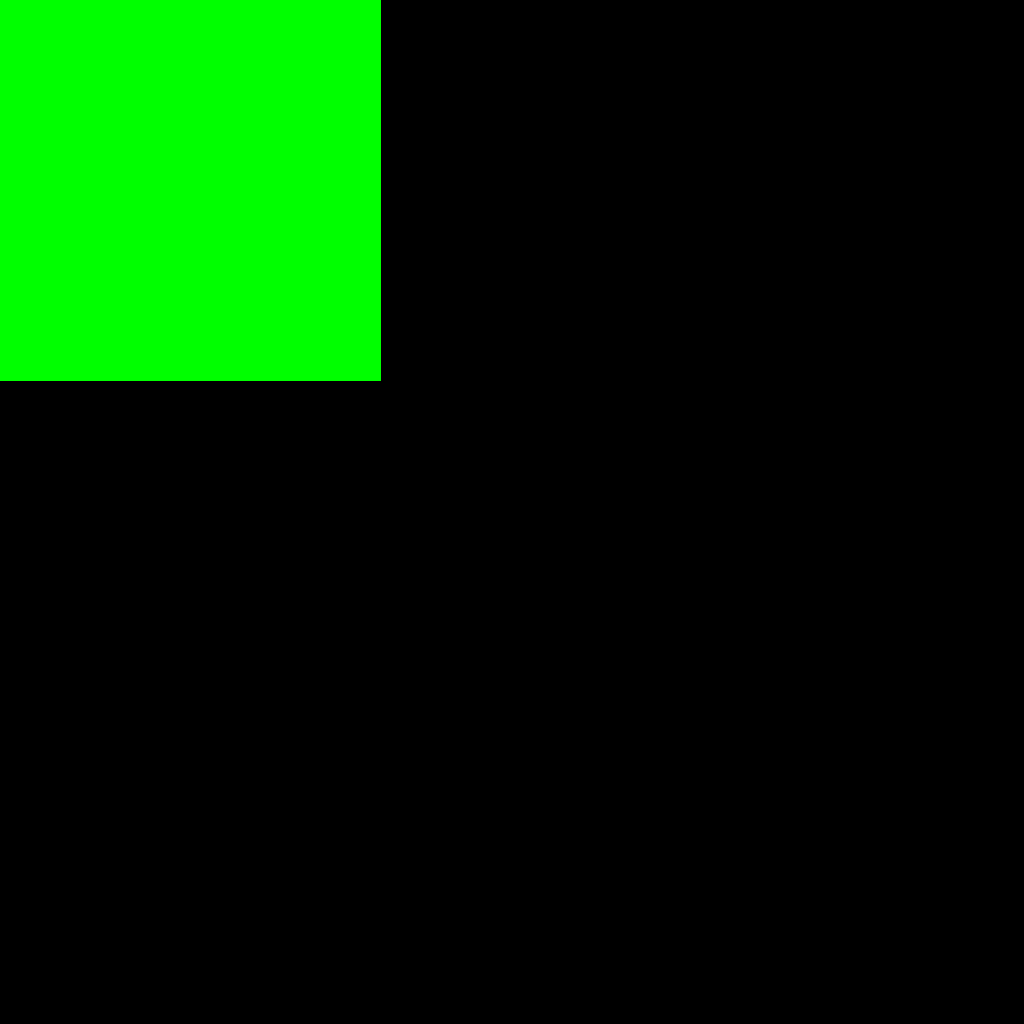}
        \caption{LN (SECER, Pair D)}
    \end{subfigure}
    \hfill
    \begin{subfigure}{0.16\textwidth}
        \centering
        \includegraphics[width=\textwidth]{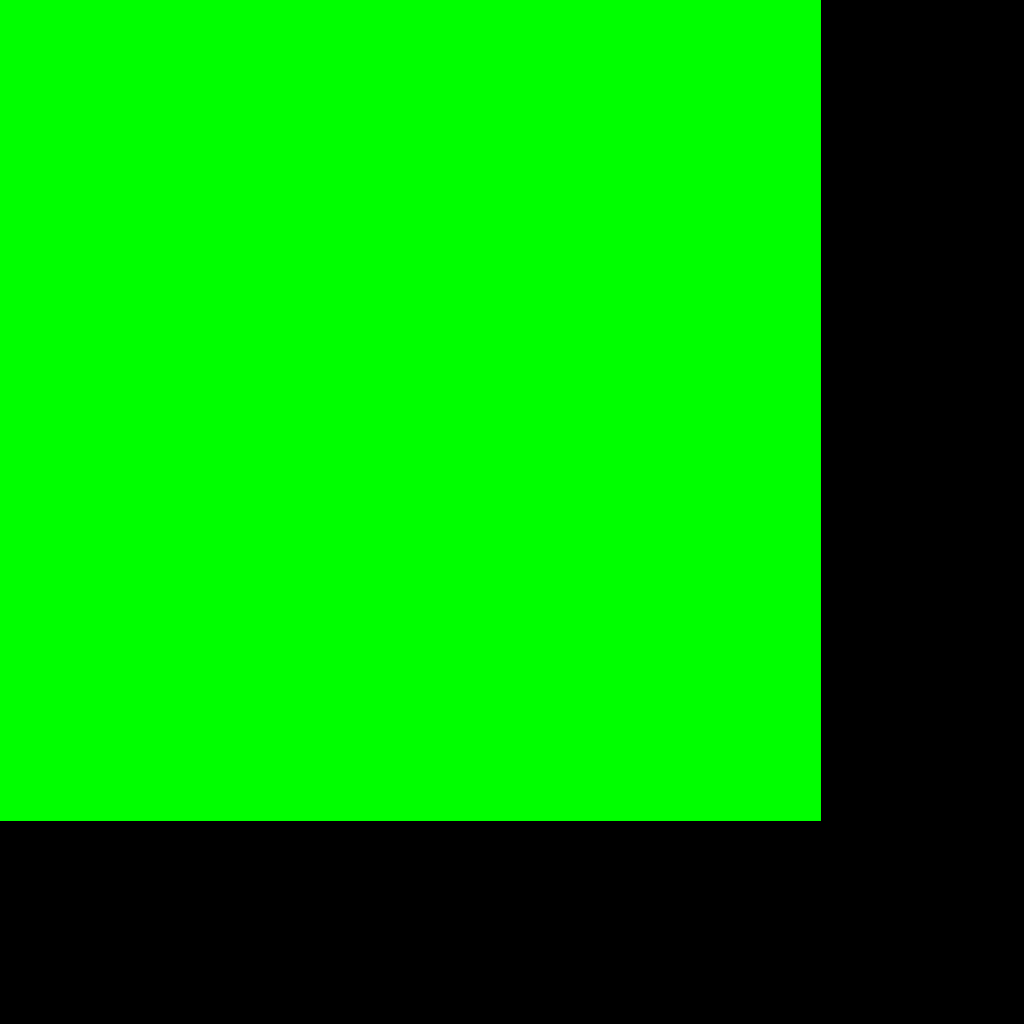}
        \caption{LN (CNIR, Pair E)}
    \end{subfigure}
    \begin{subfigure}{0.16\textwidth}
        \centering
        \includegraphics[width=\textwidth]{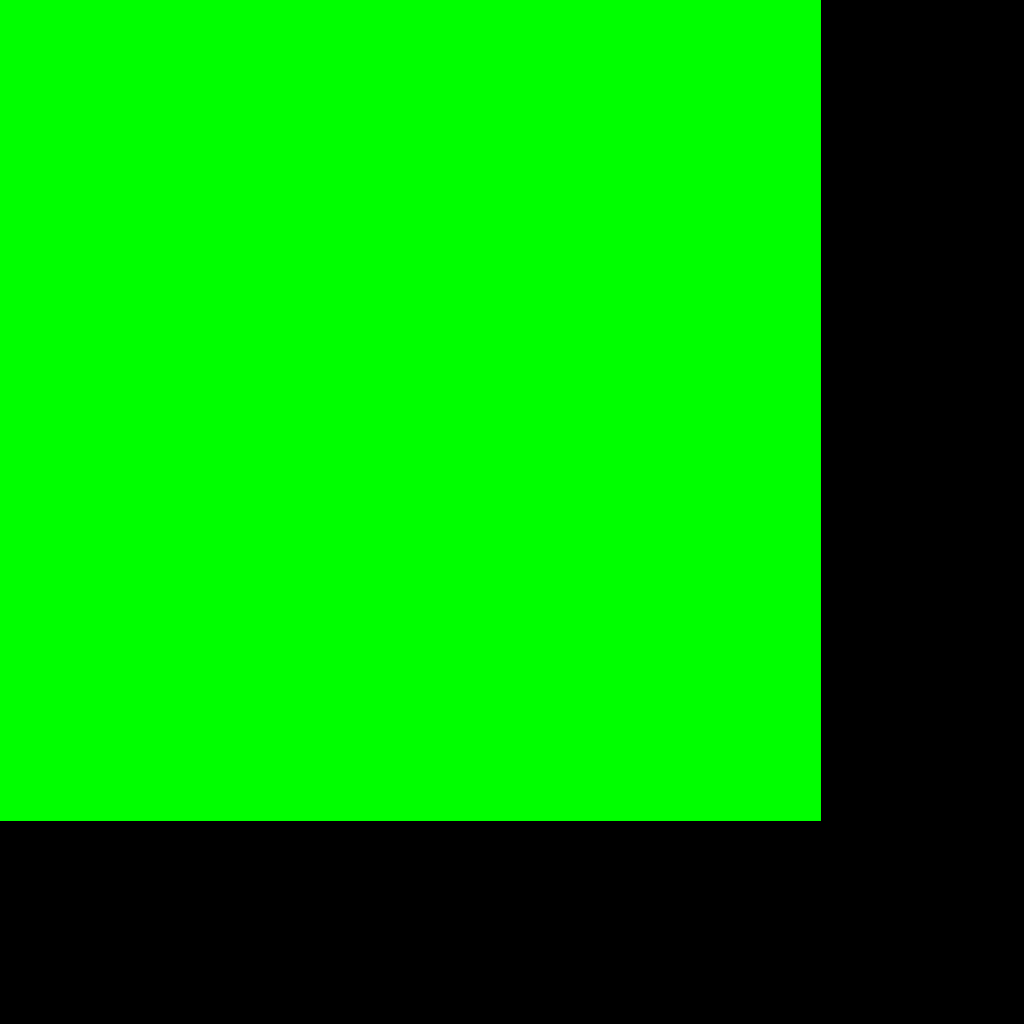}
        \caption{LN (SECER, Pair E)}
    \end{subfigure}
    \hfill
    \begin{subfigure}{0.16\textwidth}
        \centering
        \includegraphics[width=\textwidth]{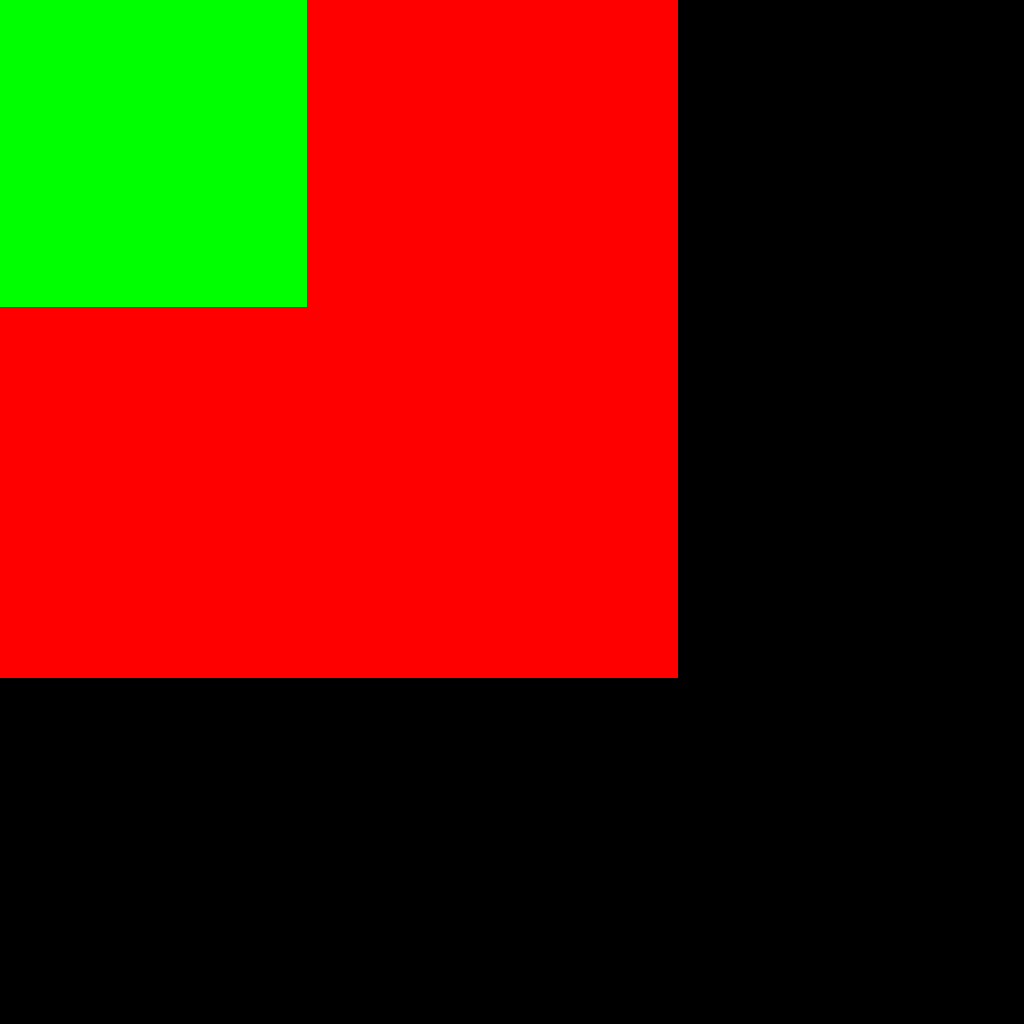}
        \caption{LN (CNIR, Pair F)}
    \end{subfigure}
    \begin{subfigure}{0.16\textwidth}
        \centering
        \includegraphics[width=\textwidth]{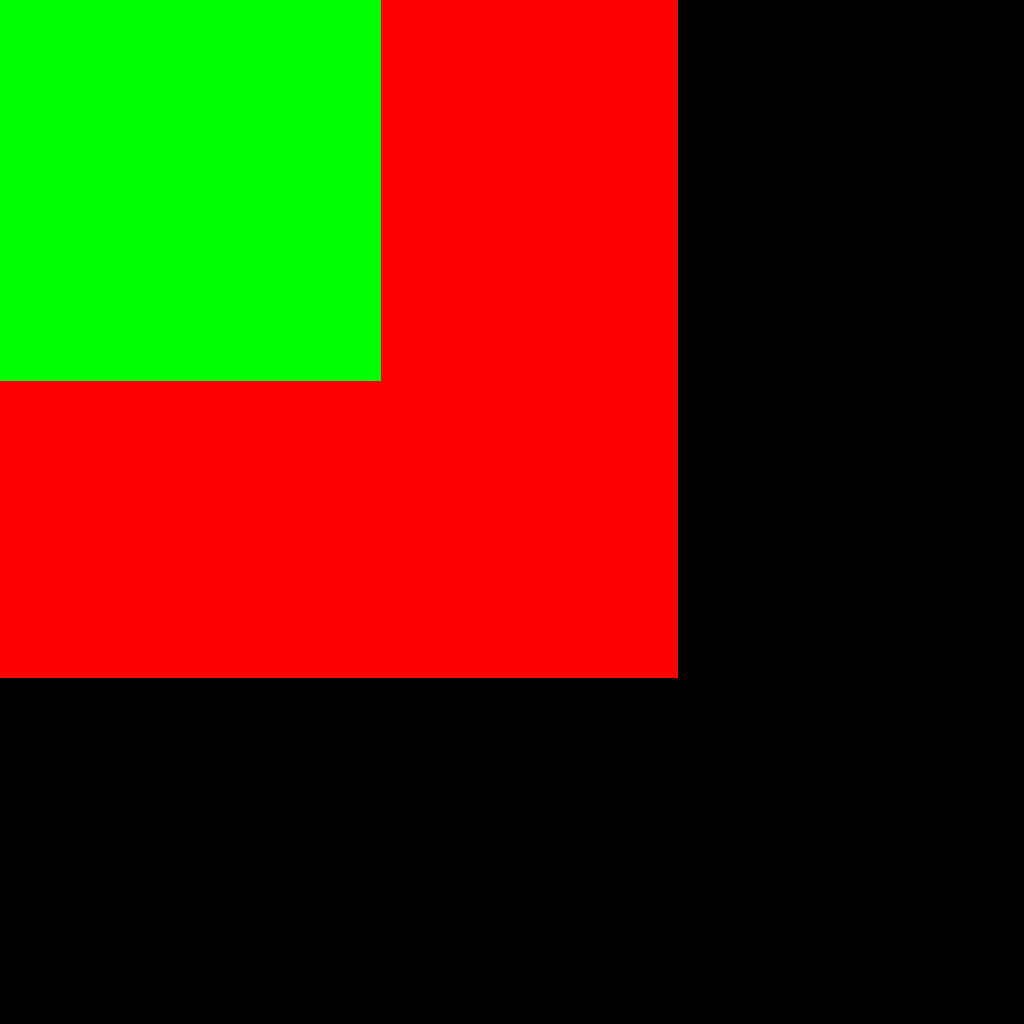}
        \caption{LN (SECER, Pair F)}
    \end{subfigure}
    \caption{Visualization of the distance function $D(a, b)$ across different topologies and metrics. Top: BA random graphs. Bottom: Lightning Network topology. For each instance, we show $D(a, b)$ for a specific source-target pair under both the CNIR metric (left) and the more complex cost metric detailed in Section \ref{sec:metric} (right). The color represents the shortest distance value, with black regions indicating infinite distance (i.e., no feasible path exists for the given capacity thresholds), demonstrating that the large-scale ``plateau'' structure is a universal phenomenon exploited by our solver.}
    \Description{Twelve heatmaps of the distance function over the two-dimensional capacity space, arranged as CNIR and SECER pairs for three source-target pairs on BA graphs in the top row and three on the Lightning Network in the bottom row. Every heatmap consists of large contiguous constant-color plateau regions; the Lightning Network instances contain only a few large plateaus, while the BA graphs show more granular staircase-like patterns.}
    \label{fig:heatmaps}
\end{figure*}

Figure \ref{fig:heatmaps} visualizes this effect, showing $D(a, b)$ across the capacity space for various source-target pairs under both the standard CNIR metric and the economic-aware metric of Section \ref{sec:metric}. Plateaus emerge universally across topologies and cost metrics: Lightning Network instances (bottom row) are remarkably sparse, often consisting of a few large constant-distance plateaus, while synthetic BA graphs (top row) exhibit more granular patterns with frequent transitions. In all cases, large contiguous plateaus exist, indicating that the network topology itself---rather than the specific path metric---primarily drives the spatial structure of $D(a, b)$ and directly validating our plateau pruning strategy.

The following results formalize this structure and show how it can be exploited for pruning.

\begin{lemma}[Distance Plateau]
    \label{lemma:plateau}
    For any rectangular region $R = [a_1, a_2] \times [b_1, b_2]$ in the capacity space, if $D(a_1, b_1) = D(a_2, b_2) = C$, then for all $(a, b) \in R$, $D(a, b) = C$.
\end{lemma}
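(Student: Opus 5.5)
The plan is a direct sandwich argument using Lemma~\ref{lemma:mono}. Fix an arbitrary point $(a,b) \in R$, so that $a_1 \le a \le a_2$ and $b_1 \le b \le b_2$. The corner $(a_1,b_1)$ is componentwise dominated by $(a,b)$, which in turn is componentwise dominated by $(a_2,b_2)$. The key point is that Lemma~\ref{lemma:mono} only needs componentwise comparability, and every point of the rectangle is comparable to both of these corners. The other two corners, $(a_1,b_2)$ and $(a_2,b_1)$, play no role.

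Applying Lemma~\ref{lemma:mono} twice gives
\[
D(a_1,b_1) \le D(a,b) \le D(a_2,b_2).
\]
By hypothesis both ends of this chain equal $C$, so $C \le D(a,b) \le C$, and hence $D(a,b) = C$.

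The main point needing care is the value $\infty$. Lemma~\ref{lemma:mono} is stated for extended-real values, and $D(a,b) = \infty$ when $G_{a,b}$ contains no $s$--$t$ path. If $C = \infty$, the lower bound alone forces $D(a,b) = \infty$, since $\infty \le D(a,b)$ leaves no other option. If $C < \infty$, the upper bound gives $D(a,b) \le C < \infty$. So the sandwich argument works unchanged in the extended ordered set $\mathbb{R}_{>0} \cup \{\infty\}$.

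I would also note that $a$ and $b$ may range over all of $R$, not only over the discrete set $W$. This causes no difficulty, because the proof of Lemma~\ref{lemma:mono} rests only on the inclusion $E_{a',b'} \subseteq E_{a,b}$, and that inclusion holds for arbitrary real thresholds. No step is a genuine obstacle; the proof is a one-line consequence of monotonicity.
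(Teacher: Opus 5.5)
Your proposal is correct and matches the paper's proof: both sandwich $D(a,b)$ between the diagonal corners using Lemma~\ref{lemma:mono}, which gives $D(a,b)=C$. Your extra remarks on the case $C=\infty$ and on thresholds outside $W$ are sound, and they make explicit what the paper leaves implicit.
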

\begin{proof}
    By Lemma \ref{lemma:mono}, $D(a_1, b_1) \le D(a, b) \le D(a_2, b_2)$ for any $(a, b) \in R$. Since $D(a_1, b_1) = D(a_2, b_2) = C$, it follows that $D(a, b) = C$.
\end{proof}

\begin{theorem}[Plateau Pruning]
    \label{thm:sparsity}
    If a rectangular region $R = [a_1, a_2] \times [b_1, b_2]$ satisfies $D(a_1, b_1) = D(a_2, b_2)$, then the objective function $\Phi(a, b) = \frac{a+b}{D(a, b)}$ within $R$ is maximized at the maximal corner $(a_2, b_2)$.
\end{theorem}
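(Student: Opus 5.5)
The plan is to reduce the statement to the Distance Plateau lemma and then observe that, once the denominator is frozen, the objective is just a rescaled linear function of $a+b$. Set $C = D(a_1,b_1) = D(a_2,b_2)$. By Lemma~\ref{lemma:plateau}, every $(a,b) \in R$ satisfies $D(a,b) = C$. Hence, restricted to $R$, the objective becomes $\Phi(a,b) = (a+b)/C$, a function whose denominator no longer depends on the point of the region.

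Next I will split into two cases according to the value of $C$. If $C = \infty$, no feasible path exists anywhere in $R$. Then $\Phi \equiv 0$ on $R$ under the natural convention $(a+b)/\infty = 0$, and every point of $R$, in particular $(a_2,b_2)$, is a maximizer. This matches the algorithm, which returns early in that case. If $C < \infty$, stability condition (ii) gives $C > 0$, since $C$ is the distance of some feasible path. The map $(a,b) \mapsto (a+b)/C$ is then non-decreasing in each coordinate, because capacities are non-negative and $C$ is a fixed positive constant. For any $(a,b)\in R$ we have $a \le a_2$ and $b \le b_2$, so $a+b \le a_2+b_2$. This gives $\Phi(a,b) \le (a_2+b_2)/C = \Phi(a_2,b_2)$, so the maximum over $R$ is attained at the maximal corner.

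Finally, I will note what this means for Algorithm~\ref{alg:quadtree}. Evaluating only $(a_2,b_2)$ on a plateau block loses nothing. Combined with Theorem~\ref{thm:equivalence}, the path $P_{max}$ returned by the shortest-path call at $(a_2,b_2)$ achieves at least this value.

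The only delicate point is the sign and finiteness of $C$, which is needed so that dividing by it preserves order. Positivity comes from condition (ii), and the infinite case is handled by the convention above. Everything else follows directly from Lemma~\ref{lemma:mono} through Lemma~\ref{lemma:plateau}.
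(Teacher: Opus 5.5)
Your proposal is correct and matches the paper's proof: you use Lemma~\ref{lemma:plateau} to fix the denominator at $C$ on all of $R$, then bound $(a+b)/C \le (a_2+b_2)/C$. Your explicit handling of $C=\infty$ (via the convention $\Phi=0$) and of $C>0$ (via the stability condition) fills in details that the paper leaves implicit.
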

\begin{proof}
    By Lemma \ref{lemma:plateau}, $D(a, b) = C$ for all $(a, b) \in R$. Thus $\Phi(a, b) = \frac{a+b}{C} \le \frac{a_2+b_2}{C} = \Phi(a_2, b_2)$.
\end{proof}


In the algorithm, after computing the distances at both diagonal corners of a block---$d_{min} = D(a_{min}, b_{min})$ and $d_{max} = D(a_{max}, b_{max})$ (Lines 14 and 18)---if $d_{min} = d_{max}$, Theorem \ref{thm:sparsity} guarantees that the block's optimal solution is at the maximal corner $(a_{max}, b_{max})$. The algorithm then updates $\Phi^*$ and returns immediately (Lines 19--25), skipping all interior points and sub-blocks. This pruning mechanism is particularly effective when the search space contains large regions of uniform distances.

\subsubsection{Threshold Pruning}
\label{subsubsec:threshold}

The second strategy exploits the structure of the objective function to derive an upper bound on the path distance that any candidate solution within a block can usefully have. For a block $[a_{min}, a_{max}] \times [b_{min}, b_{max}]$, the numerator $a + b$ is bounded above by $a_{max} + b_{max}$. Therefore, for any $(a, b)$ in the block to improve on the current best $\Phi^*$, its path distance $d = D(a, b)$ must satisfy:
\[
    \frac{a_{max} + b_{max}}{d} > \Phi^* \Rightarrow d < \frac{a_{max} + b_{max}}{\Phi^*}.
\]
We therefore define the \emph{depth limit}
\[
    d_{lim} = \begin{cases} \infty & \text{if } \Phi^* = 0, \\ \frac{a_{max} + b_{max}}{\Phi^*} & \text{if } \Phi^* > 0, \end{cases}
\]
and introduce the truncated distance function:
\[
    D_{d_{lim}}(a, b) = \begin{cases} D(a, b) & \text{if } D(a, b) \le d_{lim}, \\ \infty & \text{if } D(a, b) > d_{lim}. \end{cases}
\]
The following theorem justifies replacing $D$ with $D_{d_{lim}}$ during the search without loss of optimality.

\begin{theorem}[Correctness of Threshold Pruning]
    \label{thm:threshold}
    Let $\Phi^*$ be the current best objective value and $d_{lim} = \frac{a_{max}+b_{max}}{\Phi^*}$ for a block $R = [a_{min}, a_{max}] \times [b_{min}, b_{max}]$. No point $(a, b) \in R$ with $D(a, b) > d_{lim}$ can improve $\Phi^*$.
\end{theorem}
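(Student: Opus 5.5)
The argument is a direct bound on the objective value at any point of $R$. Since the statement defines $d_{lim} = (a_{max}+b_{max})/\Phi^*$, we are in the case $\Phi^* > 0$; the case $\Phi^* = 0$ corresponds to $d_{lim} = \infty$, where nothing is truncated and there is nothing to prove. Fix any $(a,b) \in R$ with $D(a,b) > d_{lim}$. The goal is to show that $\Phi(a,b) = \frac{a+b}{D(a,b)} \le \Phi^*$, so that this point cannot strictly improve the incumbent.

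The key steps, in order, are as follows.

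\emph{Bound the numerator.} Membership in $R$ gives $a \le a_{max}$ and $b \le b_{max}$. Hence $a+b \le a_{max}+b_{max}$, and $a+b > 0$ because capacities are positive.

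\emph{Bound the denominator.} By the stability condition (ii), $D(a,b) > 0$ whenever it is finite. The hypothesis then gives $D(a,b) > d_{lim} > 0$.

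\emph{Combine.} Putting the two bounds together,
\[
\Phi(a,b) = \frac{a+b}{D(a,b)} < \frac{a_{max}+b_{max}}{d_{lim}} = \Phi^*.
\]
If $D(a,b) = \infty$, then $\Phi(a,b) = 0 \le \Phi^*$ under the convention that infeasible points contribute value zero.

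\emph{Transfer to paths.} The statement should also hold at the level of paths, since the algorithm ultimately returns paths. By Theorem~\ref{thm:equivalence}, every path whose bottleneck pair $(\delta(P^{s\to t}), \delta(P^{t\to s}))$ lies in $R$ has $\Phi(P) \le \frac{a+b}{D(a,b)}$ for that pair. So no path attributable to such a grid point can beat $\Phi^*$ either. It follows that replacing $D$ by $D_{d_{lim}}$ in the search loses no improving candidate.

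The main obstacle is a subtlety rather than a computation: $d_{lim}$ is a block-level bound, while the quantity being compared is pointwise. I would stress that the argument uses only the \emph{upper} corner $(a_{max}, b_{max})$ to bound every point in the block. This is exactly why a single threshold per block is valid.

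A second subtlety is that $\Phi^*$ only increases during the search. Once a point has been pruned, a later, larger $\Phi^*$ keeps it pruned, so pruning decisions never need revisiting. I would remark on this monotonicity at the end.

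Monotonicity of $D$ (Lemma~\ref{lemma:mono}) is not needed here, because the argument bounds each point directly.
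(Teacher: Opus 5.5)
Your proposal is correct and matches the paper's proof: both bound the numerator by $a_{max}+b_{max}$ using membership in $R$, then use $D(a,b) > d_{lim}$ to get $\Phi(a,b) < (a_{max}+b_{max})/d_{lim} = \Phi^*$. Your extra remarks (the $\Phi^*=0$ and $D(a,b)=\infty$ cases, positivity of the denominator, and the transfer to paths via Theorem~\ref{thm:equivalence}) are sound, but they go beyond the paper's single chain of inequalities.
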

\begin{proof}
    For any $(a, b) \in R$ with $D(a, b) > d_{lim}$, we have $\Phi(a, b) = \frac{a + b}{D(a, b)} \le \frac{a_{max} + b_{max}}{D(a, b)} < \frac{a_{max} + b_{max}}{d_{lim}} = \Phi^*$.
\end{proof}

In the algorithm, $d_{lim}$ is passed to the shortest-path function as an early-termination threshold (Lines 11--14, 18): the search is aborted as soon as the tentative distance exceeds $d_{lim}$, returning $\infty$ immediately. The check is applied first to the most permissive constraint $(a_{min}, b_{min})$, since by Lemma \ref{lemma:mono} this yields the smallest possible distance in the block. If even this relaxed search returns $\infty$ (Lines 15--17), then the entire block is guaranteed to be $\infty$ and is pruned.

Furthermore, this thresholding mechanism provides a significant secondary benefit by simplifying the distance landscape. By mapping all distances exceeding $d_{lim}$ to $\infty$, the search space becomes more uniform in regions with low objective values. This allows the Plateau Pruning strategy (Section \ref{subsubsec:plateau}) to effectively identify and skip these regions much earlier in the quadtree decomposition. As the algorithm discovers better solutions and $\Phi^*$ increases, $d_{lim}$ becomes more restrictive, causing more blocks to be discarded or combined into these uniform infinite-distance regions.

Since both pruning rules are individually safe, the exactness of the full procedure follows directly:

\begin{corollary}[Exactness of \textsc{SolveVPC}]
    \label{cor:exactness}
    Algorithm \ref{alg:quadtree} returns $\Phi^* = \max_{P \in \mathcal{P}_{s\to t}} \Phi(P)$.
\end{corollary}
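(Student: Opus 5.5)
By Theorem~\ref{thm:equivalence}, it suffices to show that the returned value $\Phi^*$ equals $M := \max_{a,b\in W} \frac{a+b}{D(a,b)}$. The plan is to prove two inequalities: $\Phi^* \le M$ (soundness) and $\Phi^* \ge M$ (completeness).

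\textbf{Soundness.} Every update sets $\Phi^* = (a_{max}+b_{max})/d_{max}$, where $d_{max}$ is finite and the search was not truncated. So $d_{max} = D(a_{max},b_{max})$, and the value written is $\Phi(a_{max},b_{max}) \le M$. The returned path $P_{max}$ attains $D(a_{max},b_{max})$ in $G_{a_{max},b_{max}}$. By the converse direction in the proof of Theorem~\ref{thm:equivalence}, $P_{max}$ therefore realizes an objective value at least $\Phi^*$. Hence $\Phi^*$ is both achievable and at most $M$.

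\textbf{Completeness.} Fix a grid point $(a^\circ,b^\circ)$ with $\Phi(a^\circ,b^\circ)=M$. I will show by induction on the recursion that the following invariant holds. Whenever \textsc{QuadSearch} is called on a block $R$ containing $(a^\circ,b^\circ)$, then at the moment that call returns, $\Phi^* \ge \Phi(a^\circ,b^\circ)$, or else $\Phi^*$ already satisfied this earlier.

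Since the quadrants partition the index rectangle (degenerate blocks are empty), exactly one child contains the point. The recursion terminates because each split strictly shrinks some nonempty index range, and a $1\times1$ block has $d_{min}=d_{max}$ unless the search is truncated, in which case it returns.

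I expect the truncated values to be the main obstacle, so I would handle the cases at the call on $R$ carefully, with $\Phi^*$ denoting its value on entry.

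\emph{Case 1:} $d_{min}=\infty$. Then either $G_{a_{min},b_{min}}$ has no path, in which case by Lemma~\ref{lemma:mono} every point of $R$ has $D=\infty$, contradicting the finiteness of $M$ (which holds if any path exists). Or $D(a_{min},b_{min}) > d_{lim}$, in which case by Lemma~\ref{lemma:mono} $D(a^\circ,b^\circ) > d_{lim}$, and Theorem~\ref{thm:threshold} gives $\Phi(a^\circ,b^\circ) \le \Phi^*$ already.

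\emph{Case 2:} $d_{min}$ is finite and $d_{min}=d_{max}$. Both searches then returned true distances, so Lemma~\ref{lemma:plateau} and Theorem~\ref{thm:sparsity} give $\Phi(a^\circ,b^\circ) \le \Phi(a_{max},b_{max})$, and the update makes $\Phi^*$ at least this value.

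\emph{Case 3:} the values differ. This includes the situation where $d_{max}$ was truncated to $\infty$. In that situation there is no false plateau, since $d_{min}$ is finite. We recurse into the child containing $(a^\circ,b^\circ)$ and apply the induction hypothesis.

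Two subtleties need explicit attention.

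First, $\Phi^*$ only increases over time, so it makes $d_{lim}$ smaller. A bound established in an earlier call, or in a sibling call, is never lost, because the invariant is stated as ``$\Phi^* \ge$'' and $\Phi^*$ is monotone.

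Second, in the plateau case I must check that equal finite values cannot arise from truncation. Truncation only ever produces $\infty$, and Case 1 already excludes $d_{min}=\infty$, so this is fine.

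Applying the invariant to the root block $[1,|W|]^2$, which contains $(a^\circ,b^\circ)$, gives $\Phi^* \ge M$. Combined with soundness, this yields the claim. The degenerate situation where no $s$–$t$ path exists is treated separately: both sides equal $0$, with the convention $\max\emptyset=0$.
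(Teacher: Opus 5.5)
Your proposal is correct and follows the same route as the paper. You reduce to the grid maximum $M$ via Theorem~\ref{thm:equivalence}, then show every block is either discarded safely by Theorem~\ref{thm:threshold} or resolved at its maximal corner by Theorem~\ref{thm:sparsity}, with your invariant, termination argument, soundness direction, and truncation check making rigorous what the paper's one-paragraph sketch leaves implicit.

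One wording fix: in Case~1 what you need is $D(a^\circ,b^\circ)<\infty$, not ``finiteness of $M$'' (which always holds). Even that is unnecessary, since $D(a^\circ,b^\circ)=\infty$ gives $\Phi(a^\circ,b^\circ)=0\le\Phi^*$ directly.
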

\begin{proof}
    The recursion covers the entire index grid, and a block is discarded only by Theorem \ref{thm:sparsity} or Theorem \ref{thm:threshold}, neither of which can eliminate a point that improves on the incumbent $\Phi^*$. Every surviving point is eventually resolved at a plateau corner, so upon termination $\Phi^* = \max_{a, b \in W} \frac{a+b}{D(a,b)}$, which equals $\max_{P} \Phi(P)$ by Theorem \ref{thm:equivalence}.
\end{proof}

\subsection{Complexity Analysis}
\label{subsec:complexity}

Let $n$ and $m$ denote the number of nodes and edges in the PCN, respectively. The discrete search space for the capacity thresholds $(a, b)$ is bounded by $|W| \times |W| = O(m^2)$. Without any pruning, the solver would resort to an exhaustive evaluation of all capacity pairs, each requiring an $O(n + m)$ Dijkstra computation, yielding a theoretical worst-case complexity of $O(m^2(n + m))$. This bound matches the baseline established by Thor \cite{wei2024thor}. The space overhead beyond the graph itself is negligible, as the recursion depth is bounded by $O(\log |W|)$.

\begin{figure*}[t]
    \centering
    \begin{subfigure}[c]{0.83\linewidth}
        \centering
        \includegraphics[width=\linewidth]{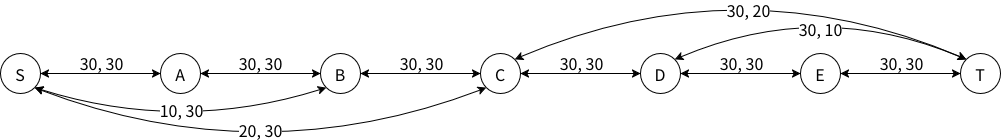}
        \caption{Adversarial structure}
        \label{fig:adv_graph}
    \end{subfigure}
    \hfill
    \begin{subfigure}[c]{0.15\linewidth}
        \centering
        \includegraphics[width=\linewidth]{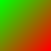}
        \caption{Distance function $D(a, b)$}
        \label{fig:adv_heatmap}
    \end{subfigure}
    \caption{Structure of the adversarial graph in small scale and the resulting distance function landscape in large scale.}
    \Description{Left: an adversarial graph consisting of a backbone path between the source and target, coupled with staggered bypass edges of increasing capacities. Right: the resulting distance function heatmap over the capacity space, showing a fine staircase pattern with almost no large plateaus.}
    \label{fig:adv_structure}
\end{figure*}

To investigate the limits of the quadtree decomposition, we introduce a class of \emph{adversarial graphs} specifically designed to minimize the occurrence of distance plateaus (see Figure \ref{fig:adv_structure}). By coupling a backbone path with staggered bypass capacities, we force the distance function $D(a, b)$ to change at nearly every increment of $a$ or $b$, rendering the sparsity-based pruning (Section \ref{subsubsec:plateau}) alone insufficient and potentially forcing the algorithm toward its theoretical complexity limit.

In practice, however, the two pruning mechanisms act \textbf{synergistically}. As the search progresses and the incumbent $\Phi^*$ improves, the dynamic depth limit $d_{lim}$ becomes increasingly restrictive, truncated regions collapse into uniform $\infty$-plateaus, and Plateau Pruning skips them wholesale---even when the distance function is strictly monotonic. The effective running time is therefore \emph{output-sensitive}: computation concentrates at the phase boundaries of $D(a, b)$ where the shortest path actually changes, rather than on the full $O(m^2)$ grid. In topologies like the Lightning Network, whose discrete capacity distributions induce large contiguous plateaus, most of the space is resolved with a handful of Dijkstra evaluations. Our experiments (Section \ref{sec:eval:efficiency}) confirm this analysis: the solver maintains an execution time below 100\,ms even on adversarial instances with 12,000 nodes, achieving a 2--5 orders of magnitude speedup over the Thor baseline and demonstrating that the combined pruning strategies collapse the search space even in worst-case configurations.

Finally, we remark that the four sub-blocks generated at each recursion step are mutually independent, so the quadtree search decomposes naturally into a task-parallel tree traversal. Our implementation exploits this at the thread level (Section \ref{sec:evaluation}); extending the same decomposition to distributed or GPU execution is a promising direction for future work.

\section{The SECER Metric}
\label{sec:metric}

While the quadtree-based solver developed in Section~\ref{sec:solver} addresses the computational challenges of VPC path selection, its practical effectiveness is contingent upon a distance function $D(P)$ that accurately captures the multifaceted costs of routing. Traditional metrics, such as the hop-based CNIR, often oversimplify the network by assuming homogeneous channel characteristics and ignoring the inherent risks of off-chain transactions. 

Balancing multiple performance indicators---capacity, transaction fees, and security risks---is fundamentally a Multi-Objective Optimization (MOO) problem \cite{verma2021comprehensive,salzman2023heuristic,warburton1987approximation}. Rather than enumerating the exact Pareto front of non-dominated paths as in EMOA* \cite{ren2025emoa}, or analyzing weighted combinations via linear-parametric optimization theory \cite{nemesch2025survey}, our framework treats the core optimization algorithm as a strictly high-performance shortest-path engine over a capacity-constrained space, and leaves it to the \textit{user} to fuse these orthogonal objectives into a single monotonic distance metric $D(P)$ representing their preferences. The solver's objective is not to map the entire multi-dimensional space, but to rapidly find the optimal path for a specific, user-defined preference.

To demonstrate this, we propose the \textit{Secure and Economic-aware Channel Efficiency Ratio} (SECER), which unifies two critical dimensions of the Lightning Network---economic overhead and security risk---by instantiating the generalized distance function as a weighted sum:
\begin{equation}
    D_{\text{SECER}}(P) = \alpha \cdot Cost(P) + \beta \cdot Risk(P)
\end{equation}
where $\alpha \ge 0$ and $\beta \ge 0$ are user-defined preference weights that reflect the relative importance of economy versus security. The resulting objective function for SECER is defined as:
\begin{equation}
    \Phi_{\text{SECER}}(P) = \frac{\delta(P^{s\to t}) + \delta(P^{t\to s})}{D_{\text{SECER}}(P)}
\end{equation}
This formulation allows the VPC path selection to adapt to diverse user requirements. By adjusting $\alpha$ and $\beta$, users can transition smoothly from a "cost-optimized" path for micro-payments to a "security-hardened" path for significant asset transfers. In the following, we detail the modeling of each component.

\subsection{Economic Cost Modeling}
Given the absence of large-scale VPC deployments, standardized fee structures remain an open area of research \cite{aumayr2025optimizing, xu2025srvc}. This work adopts the recursive fee model proposed by Aumayr et al. \cite{aumayr2025optimizing} as a representative economic benchmark for path evaluation; the fee model itself is thus not a contribution of this work---SECER's novelty lies in the continuous risk model and the tunable fusion that our exact solver optimizes directly. For a multi-hop path $P = (v_0, v_1, \dots, v_n)$, we define the total economic cost $Cost(P)$ as the sum of cumulative fees charged by all intermediaries:
\begin{equation}
    Cost(P) = \sum_{j=1}^{n-1} f_{j, j+1}
\end{equation}
where the fee $f_{j, j+1}$ charged by intermediary $v_j$ is defined as:
\begin{equation}
    \label{eq:aumayr_cost}
    f_{j, j+1} = \text{base\_fee}_{j, j+1} + \text{prop\_fee}_{j, j+1} \cdot (A + \sum_{k=j+1}^{n-1} f_{k, k+1})
\end{equation}
The term $A$ denotes a nominal target capacity specified by the user at initiation and is treated as a pre-determined constant throughout the optimization. Consequently, $D(P)$ depends solely on the static weights of the path $P$ and remains independent of the capacity thresholds $(a, b)$ explored by the solver---precisely the stability condition required in Section \ref{sec:solver} to preserve the distance plateau property. The billing rules themselves are replaceable: any fee model that preserves monotonicity can be adopted for future protocol specifications without compromising the solver's correctness.

\subsection{Security Risk Assessment}
Beyond economic considerations, security remains a paramount concern in Layer-2 solutions, as VPCs are fundamentally challenged by on-path adversaries capable of performing value-privacy or wormhole attacks. While prior work \cite{aumayr2021bitcoin} utilizes a binary bypass strategy---categorizing nodes as either secure or corrupted---our framework introduces a continuous risk assessment model. This approach allows the solver to navigate the spectrum of adversarial threats rather than employing rigid exclusions.

We adopt the adversarial model of an adaptive attacker with a fixed budget who seeks to maximize disruption by corrupting high-impact nodes. The strategic utility of a node $v$ to such an adversary is quantified via a \textit{Cost-Benefit Ratio} ($CB$):
\begin{equation}
    CB(v) = \frac{occ(v) / N_{\text{pay}}}{C_{\text{locked}}(v) / \mathcal{B}_{\text{adv}}}
\end{equation}
where $occ(v)$ represents the node's traffic hit count (the frequency of $v$ appearing in a set of $N_{\text{pay}}$ simulated payment paths), and $C_{\text{locked}}(v)$ denotes the total capacity locked by $v$ across its incident channels. The term $\mathcal{B}_{\text{adv}}$ represents the total adversarial budget. This ratio captures the trade-off between the attack's coverage (benefit) and the financial cost of corruption (cost).

The cumulative risk of a path $P$ is modeled as the sum of the cost-benefit ratios of its intermediate nodes:
\begin{equation}
    Risk(P) = \sum_{v \in P \setminus \{s, t\}} CB(v)
\end{equation}
This additive formulation captures the expanding attack surface associated with additional intermediaries. By integrating these strategic scores directly into the SECER objective, the optimization process can quantitatively weigh the adversarial attractiveness of a path against its economic costs.

Note that, while we currently utilize a summation-based risk model, alternative formulations---such as bottleneck risk (the maximum risk of any single node) or non-linear probability-based risk---can be substituted, as long as they maintain monotonicity and can be efficiently computed within the shortest-path sub-routine.

\section{Evaluation}
\label{sec:evaluation}

We evaluate our framework along two primary dimensions: the computational performance of the quadtree-based solver and the qualitative effectiveness of the SECER metric, comparing against prior baselines across diverse network topologies.

\subsection{Experimental Setup}

All experiments are conducted on a PC equipped with an AMD Ryzen 9 9900X CPU and 64GB of DDR5 memory. The core optimization framework and the quadtree-based solver are implemented in Rust to ensure high performance and memory safety. For a fair, apples-to-apples comparison, we also re-implemented the Thor Ag1 baseline in Rust using identical shortest-path subroutines and data structures. Our source code and the Lightning Network datasets are publicly available on GitHub\footnote{\url{https://github.com/swwind/quadtree-vpc}}.
We utilize the Rayon library for parallel computing to exploit the multi-core architecture of our hardware. Unless otherwise noted, all experiments are executed using 24 parallel threads.

\subsection{Algorithmic Efficiency}
\label{sec:eval:efficiency}

\begin{figure*}[t]
  \centering
  \includegraphics{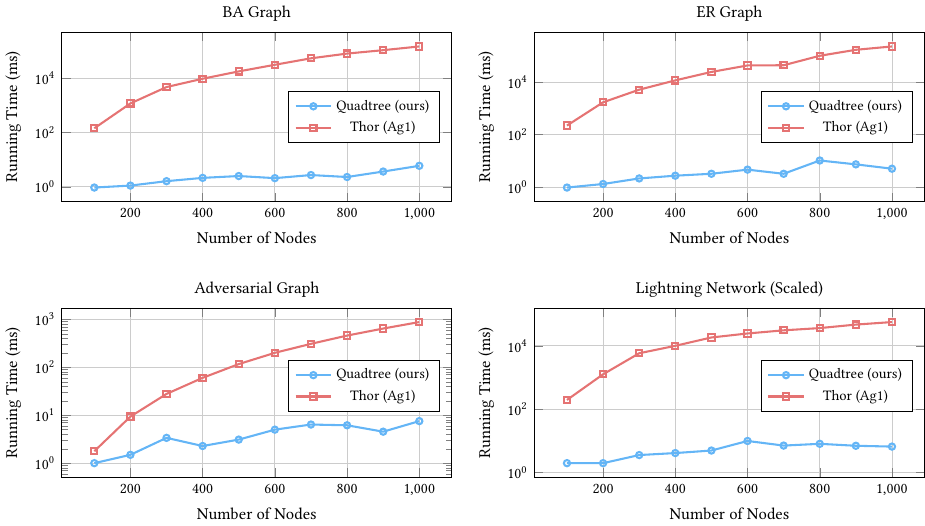}
  \caption{Running time (log scale) of Quadtree vs. Thor (Ag1) on four graph types.
    Our solver achieves 2--5 orders of magnitude speedup across all settings.}
  \Description{Four log-scale line charts of running time versus number of nodes for the Quadtree solver and Thor Ag1 on BA, ER, adversarial, and Lightning Network graphs. In all four charts the Quadtree curve lies two to five orders of magnitude below the Thor Ag1 curve.}
  \label{fig:exp1:comparison}
\end{figure*}

\begin{figure}[t]
  \centering
  \includegraphics{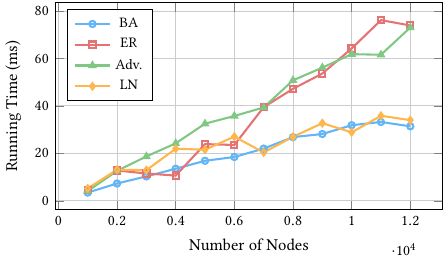}
  \caption{Scalability of the Quadtree solver on larger graphs ($n$ up to 12,000).
    Thor (Ag1) is omitted as it is computationally infeasible at this scale.}
  \Description{Line chart of Quadtree solver running time on graphs of 1,000 to 12,000 nodes. All curves remain below 80 milliseconds, with BA and Lightning Network graphs below 40 milliseconds and ER and adversarial graphs slightly higher.}
  \label{fig:exp1:large}
\end{figure}

We compare our quadtree solver with Thor's Ag1 baseline across four graph types: (i) Barabási--Albert (BA) graphs \cite{albert2002statistical}, modeling power-law degree distributions; (ii) Erdős--Rényi (ER) graphs \cite{erdHos1960evolution}, representing uniform random connectivity; (iii) adversarial graphs, constructed to minimize distance plateaus and hinder pruning (Figure~\ref{fig:adv_structure}); and (iv) Lightning Network (LN) snapshots from \texttt{1ml.com}, containing 12,552 nodes and 43,936 channels. Synthetic graphs use a consistent average degree $k=4$ to match LN density, except for the adversarial case (Section~\ref{subsec:complexity}).

As shown in Figure~\ref{fig:exp1:comparison}, our method is consistently 2--5 orders of magnitude faster than Ag1. For example, on a BA graph with 1,000 nodes, Ag1 takes up to 149s, while our solver completes in under 10ms.

To evaluate scalability, we run the quadtree solver on larger graphs with $n$ from 1,000 to 12,000 (Figure~\ref{fig:exp1:large}). The runtime remains below 80ms across all graph types. BA and LN graphs show similar performance (within 40ms), while ER and adversarial graphs are slightly slower but still under 80ms. This aligns with prior work indicating that LN exhibits scale-free properties similar to BA graphs \cite{seres2020topological,martinazzi2019evolution,zhao2021temporal}, which favor our pruning strategies by leveraging high-degree hub nodes to compress the search space.

Notably, despite being designed to suppress distance plateaus and challenge Plateau Pruning (Section~\ref{subsec:complexity}), the adversarial graph still yields runtimes under 80ms at $n=12,000$. This validates our analysis: the combination of Plateau and Threshold Pruning effectively reduces the search space even without large plateaus, ensuring robust performance in both typical and worst-case topologies.

\subsection{Ablation Study}
\label{sec:eval:ablation}

\begin{figure}[t]
  \centering
  \includegraphics{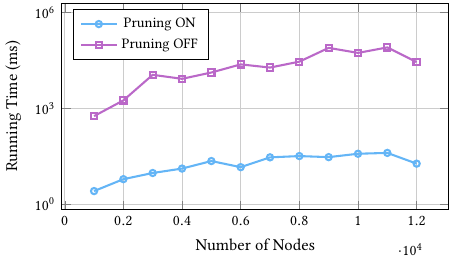}
  \caption{Running time comparison with and without pruning on the Lightning Network. Pruning provides a 3--4 orders of magnitude speedup.}
  \Description{Log-scale line chart of running time versus number of nodes on the Lightning Network with pruning enabled and disabled. The pruning-off curve grows rapidly beyond 80 seconds while the pruning-on curve stays below 50 milliseconds.}
  \label{fig:exp2:time}
\end{figure}

\begin{figure}[t]
  \centering
  \includegraphics{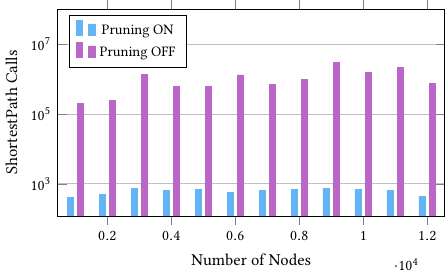}
  \caption{Number of \texttt{ShortestPath} calls with and without pruning. Our strategies reduce the computational load by several orders of magnitude.}
  \Description{Log-scale line chart of the number of ShortestPath calls versus number of nodes with pruning enabled and disabled. Pruning reduces the number of calls from over one million to under one thousand.}
  \label{fig:exp2:calls}
\end{figure}

To evaluate the effectiveness of our pruning mechanisms, we conduct an ablation study on the Lightning Network graph. Plateau pruning cannot be meaningfully ablated in isolation: the plateau test is the quadtree's recursion-termination condition, and removing it degenerates the search into the full-grid enumeration already represented by the Thor baseline in Section \ref{sec:eval:efficiency}. We therefore compare the full solver (Pruning ON) against a variant with threshold pruning disabled (Pruning OFF), isolating the contribution of threshold pruning and the synergy between the two strategies.

As shown in Figure~\ref{fig:exp2:time}, the synergistic pruning strategies achieve a \textbf{3--4 orders of magnitude} reduction in running time on the Lightning Network. Without the threshold pruning, the solver's execution time grows rapidly, exceeding $80$s at $n=11,000$, whereas the optimized solver consistently finishes in under $50$ms.

Figure~\ref{fig:exp2:calls} reflects the effectiveness of our pruning from a different perspective. Our mechanism not only reduces the average search depth within each \texttt{ShortestPath} call by early termination, but also eliminates a vast number of redundant calls by merging plateaus that cannot possibly contain a better solution. Specifically, the number of calls is lowered from over $10^6$ to under $10^3$, demonstrating that the synergistic pruning effectively collapses the search space and significantly reduces the overall computational overhead.

\subsection{The Cost-Security Trade-off}
\label{sec:eval:tradeoff}

\begin{figure}[t]
  \centering
  \includegraphics{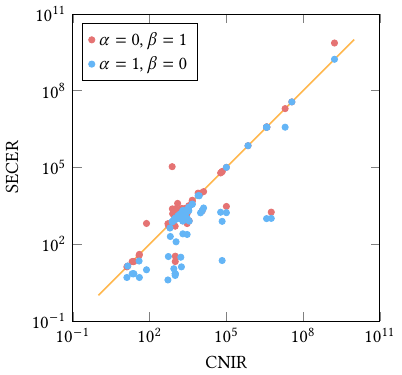}
  \caption{Comparison of economic cost between CNIR and SECER. SECER can substantially reduce economic cost relative to CNIR when $\alpha=1,\beta=0$ (blue marks).}
  \Description{Scatter plot comparing the economic cost of paths selected by CNIR on the horizontal axis and SECER on the vertical axis. In cost-optimized mode all points lie on or below the diagonal, indicating SECER selects paths with equal or lower fees.}
  \label{fig:exp3:costs}
\end{figure}

\begin{figure}[t]
  \centering
  \includegraphics{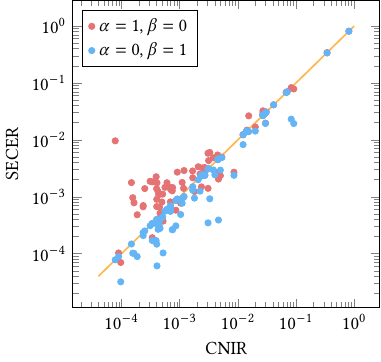}
  \caption{Comparison of cumulative security risk between CNIR and SECER. SECER can substantially reduce security risk relative to CNIR when $\alpha=0,\beta=1$ (blue marks).}
  \Description{Scatter plot comparing the cumulative security risk of paths selected by CNIR on the horizontal axis and SECER on the vertical axis. In security-optimized mode all points lie on or below the diagonal, indicating SECER selects paths with equal or lower risk.}
  \label{fig:exp4:risks}
\end{figure}

\begin{figure}[t]
  \centering
  \includegraphics{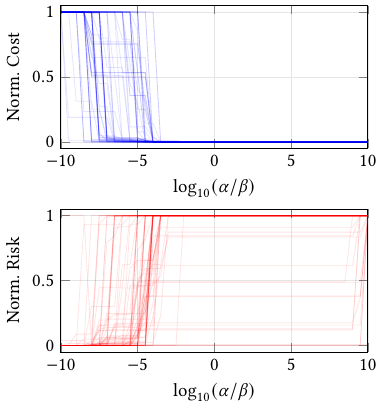}
  \caption{Sensitivity of Normalized Cost and Risk to the log-ratio of weights $\alpha/\beta$ across 100 random node pairs. The overlapping translucent lines indicate the discrete transition of optimal paths in the real Lightning Network topology.}
  \Description{Two stacked line charts showing normalized cost and normalized risk versus the base-10 logarithm of the ratio of alpha to beta, for 100 node pairs drawn as overlapping translucent lines. Most curves transition between the two extremes with discrete jumps concentrated in the interval from minus 8 to minus 3.}
  \label{fig:exp5:sensitivity}
\end{figure}

To evaluate the performance and flexibility of the SECER metric, we conduct a multi-dimensional analysis focusing on economic efficiency, security robustness, and the sensitivity of their trade-off. Using a real-world Lightning Network snapshot, we compare SECER against the CNIR baseline under different configurations of the steering parameters $\alpha$ and $\beta$. For all experiments, we utilize the recursive fee model from Equation~\ref{eq:aumayr_cost} with a target capacity $A = 10,000$ satoshis.

\subsubsection{Economic and Security Extremes} 
We first examine the two extreme configurations of SECER. As shown in Figure~\ref{fig:exp3:costs}, in cost-optimized mode ($\alpha=1, \beta=0$), SECER consistently identifies paths with lower fees than the hop-minimizing CNIR (all points fall below $y=x$), with some reductions exceeding 50\%. Conversely, Figure~\ref{fig:exp4:risks} demonstrates that in security-optimized mode ($\alpha=0, \beta=1$), SECER's risk scores are substantially lower than CNIR. These results confirm that SECER can outperform traditional hop-based metrics in either dimension by effectively navigating the network's continuous risk and fee spectrums.

\subsubsection{Sensitivity Analysis and Pareto Convex Hull}
To analyze the impact of the $\alpha,\beta$ ratio on path selection, we vary $\log_{10}(\alpha/\beta)$ from $-10$ to $10$ for 100 random node pairs in Lightning Network snapshot. Since our solver utilizes linear scalarization for the SECER metric, sweeping these weights effectively samples points on the \textbf{convex hull of the Pareto front} for the cost-risk trade-off. This analysis highlights how the solver's output transitions between different "best" paths as user preferences shift across the spectrum.

As illustrated in Figure~\ref{fig:exp5:sensitivity}, the normalized curves exhibit a clear transition zone, with most path changes occurring within the interval $\log_{10}(\alpha/\beta) \in [-8, -3]$. Within this window, the majority of node pairs undergo a discrete ``jump,'' indicating a direct switch between a security-optimized path and a cost-optimized path, while only a small subset of pairs transitions through multiple intermediate paths. As a single-objective engine performing a discrete parameter sweep, our solver discovers only a partial convex hull and may bypass non-convex Pareto-optimal solutions (see Section \ref{sec:eval:limitations}); this is consistent with our design goal of finding the exact optimal path for any \textit{given} preference in sub-100ms, letting users explore these trade-off dynamics interactively.

\subsubsection{Discussion}
The sharp transition points suggest that SECER does not merely blend objectives linearly, but identifies critical thresholds where a slight shift in preference yields significant gains in the opposite dimension: high-frequency micro-payments can be tuned to the ``low-cost'' side of the transition, while large-value transfers can be shifted toward the ``high-security'' side.

\subsection{Limitations}
\label{sec:eval:limitations}
While our framework demonstrates significant performance and flexibility gains, it is subject to several limitations. First, the exactness guarantee of our quadtree-based solver is contingent on specific metric properties, such as path-monotonicity and capacity-independence. Metrics violating these conditions may require alternative optimization strategies. Second, due to the lack of publicly available real-time balance data for the Lightning Network, our evaluation utilizes public channel capacity as a proxy for available liquidity, following common practices in PCN research. Finally, although SECER provides a tunable preference model, its use of linear scalarization means it primarily explores the convex hull of the multi-objective Pareto front, potentially bypassing non-convex solutions.

\section{Conclusion}
\label{sec:conclusion}

In this paper, we studied the problem of VPC path selection under generalized monotone metrics in Payment Channel Networks. We proposed a structure-aware exact solver based on quadtree search, which exploits monotonicity and distance plateau properties in the capacity-constrained search space to prune large regions while preserving optimality. The resulting approach achieves orders of magnitude speedups over exhaustive methods while maintaining exactness.

To demonstrate the flexibility of the framework, we instantiated it with a composite metric (SECER) that integrates economic cost and security risk, enabling adaptive trade-offs across different application scenarios. Experimental results on real-world Lightning Network snapshots show that our method provides both scalable performance and practical effectiveness for VPC path selection.

Our results suggest several directions for future work. First, the quadtree recursion naturally decomposes into independent subproblems with minimal shared state, making distributed and GPU-accelerated realizations a promising avenue for scaling to even larger networks. Second, extending the solver---exactly or with bounded approximation error---to metrics that violate path-monotonicity or capacity-independence would broaden its applicability to general routing settings. Third, since linear scalarization explores only the convex hull of the Pareto front, integrating $\varepsilon$-constraint or multi-objective search techniques could recover non-convex trade-off solutions. Finally, we plan to develop refined models of economic cost and adversarial risk that capture dynamic channel balances in emerging VPC deployments.

\begin{acks}
  \emergencystretch=1.5em
  This work was supported by the Natural Science Foundation on Frontier Leading Technology Basic Research Project of Jiangsu (No.~BK20222001), the NSFC-62272215 and the 111 Center (No.~B26023). Yuan Zhang and Sheng Zhong are the corresponding authors.\par
\end{acks}

\bibliographystyle{ACM-Reference-Format}
\nocite{*}
\bibliography{bib/references}


\end{document}